\DeclareMathOperator{\E}{\mathbb{E}}
\definecolor{blue2}{RGB}{0,100,200}
\newcolumntype{M}[1]{>{\centering\arraybackslash}m{#1}}
\newcolumntype{N}{@{}m{0pt}@{}}
\newcolumntype{M}[1]{>{\centering\arraybackslash}m{#1}}
\newcolumntype{N}{@{}m{0pt}@{}}
\newtheorem{theorem}{{Theorem}}
\newtheorem{lemma}[theorem]{{Lemma}}
\newtheorem{proposition}[theorem]{{Proposition}}
\newtheorem{corollary}[theorem]{{Corollary}}
\newcommand{\cG}{{\cal G}}
\newcommand{\cR}{{\cal R}}
\DeclareMathAlphabet{\mathbfsl}{OT1}{ppl}{b}{it} %{OT1}{cmr}{bx}{it}
\newcommand{\bv}{\mathbfsl{v}}
\newcommand{\br}{\mathbfsl{r}}
\newcommand{\be}[1]{\begin{equation}\label{#1}}
\newcommand{\ee}{\end{equation}} 
\newcommand{\eq}[1]{(\ref{#1})}
\renewcommand{\le}{\leqslant} 
\renewcommand{\leq}{\leqslant}
\renewcommand{\geq}{\geqslant}
\renewcommand{\Bbb}{\mathbb}
\newcommand{\R}{{\Bbb R}}
\newcommand{\Tref}[1]{Theo\-rem\,\ref{#1}}
\newcommand{\Pref}[1]{Pro\-po\-si\-tion\,\ref{#1}}
\newcommand{\Lref}[1]{Lem\-ma\,\ref{#1}}
\newcommand{\Cref}[1]{Co\-ro\-lla\-ry\,\ref{#1}}
\newcommand{\norm}[1]{\left\lVert#1\right\rVert}
\newcommand{\Fq}{{{\Bbb F}}_{\!q}}
\begin{document}
\title{Coded Computing via Binary Linear Codes:\\ Designs and Performance Limits} 
\author{Mahdi Soleymani$^*$, Mohammad Vahid Jamali$^*$, and \\Hessam Mahdavifar,~\IEEEmembership{Member,~IEEE}
		\thanks{$^*$Authors of equal contribution.}
		\thanks{The material in this paper was presented in part at the IEEE Information Theory Workshop (ITW), Visby, Sweden, Aug. 2019 \cite{jamali2019coded}.}
		\thanks{The authors are with the Department of Electrical Engineering and Computer Science, University of Michigan, Ann Arbor, MI 48109, USA (e-mail: mahdy@umich.edu, mvjamali@umich.edu, hessam@umich.edu).}
		%\thanks{Mahdi Soleymani and Mohammad Vahid Jamali contributed equally to this work.}
		\thanks{This work was supported by the National Science Foundation under grants CCF--1763348, CCF--1909771, and  CCF--1941633.}
	}
	
\maketitle
\begin{abstract}
We consider the problem of coded distributed computing where a large linear computational job, such as a matrix multiplication, is divided into $k$ smaller tasks, encoded using an $(n,k)$ linear code, and performed over $n$ distributed nodes. The goal is to reduce the average execution time of the computational job. We provide a connection between the problem of characterizing the average execution time of a coded distributed computing system and the problem of analyzing the error probability of codes of length $n$ used over erasure channels. Accordingly, we present closed-form expressions for the execution time using binary random linear codes and the best execution time any linear-coded distributed computing system can achieve. It is also shown that there exist \textit{good} binary linear codes that not only attain (asymptotically) the best performance that any linear code (not necessarily binary) can achieve but also are numerically stable against the inevitable rounding errors in practice.
We then develop a low-complexity algorithm for decoding Reed-Muller (RM) codes over erasure channels. Our decoder only involves additions, subtractions, {and inversion of relatively small matrices of dimensions at most $\log n+1$}, and enables coded computation over real-valued data. Extensive numerical analysis of the fundamental results as well as RM- and polar-coded computing schemes demonstrate the excellence of the RM-coded computation in achieving close-to-optimal performance while having a low-complexity decoding and explicit construction.
%We also investigate the performance of coded distributed computing systems using
% two of the coding schemes which recently attained extensive attention in the coding theory literature, namely, 
 %polar and Reed-Muller (RM) codes that can benefit from low-complexity decoding, and superior performance, respectively, as well as explicit constructions. 
 The proposed framework in this paper enables efficient designs of distributed computing systems given the rich literature in the channel coding theory. 
\end{abstract}

\section{Introduction}
There has been an increasing interest in recent years toward applying ideas from coding theory to improve the performance of various computation, communication, and networking applications. For example, ideas from repetition coding has been applied to several setups in computer networks, e.g., by running a request over multiple servers and waiting for the first completion of the request by discarding the rest of the request duplicates \cite{ananthanarayanan2013effective,vulimiri2013low,gardner2015reducing}. Another direction is to investigate the application of coding theory in cloud networks and distributing computing systems \cite{jonas2017occupy,lee2018speeding}.
%A rule of thumb is that when the computational job consists of linear operations, coding techniques can be applied to improve the run-time performance of the system under consideration.
{In general, coding
techniques can be applied to improve the run-time performance of distributed computing systems.}

Distributed computing refers to the problem of performing a large computational job over many, say $n$, nodes with limited processing capabilities. A coded computing scheme aims to divide the job to $k<n$ tasks and then to introduce $n-k$ redundant tasks using an $(n,k)$ code, in order to alleviate the effect of slower nodes, also referred to as \textit{stragglers}. In such a setup, it is often assumed that each node is assigned one task and hence, the total number of \textit{encoded tasks} is $n$ equal to the number of nodes. 

%Recently, there has been increasing attention toward 
Recently, there has been extensive research activities to
leverage coding schemes in order to boost the performance of distributed computing systems \cite{li2020coded,lee2018speeding,li2016unified,reisizadeh2017coded,li2017coding,yang2017computing,lee2017high,dutta2016short,wang2018coded,dutta2019optimal,yu2020straggler,prakash2020coded, jahani2018codedsketch}. 
%For example, \cite{lee2018speeding} has applied coding theory to combat the deteriorating effects of stragglers in matrix multiplication and data shuffling. The authors in \cite{reisizadeh2017coded} considered coded distributed computing in heterogeneous clusters consisting of servers with different computational capabilities.
%They further proposed an asymptotically optimal algorithm for distributed matrix multiplication over heterogeneous clusters.  However,  their decoding procedure require taking the inverse of very large matrices, comparable to the size of the total job, which inherently imposes severe processing time. 
However, most of the work in the literature focus on the application of maximum distance separable (MDS) codes. This is while encoding and decoding of MDS codes over real numbers, especially when the number of servers is large, e.g., more than $100$, face several barriers, such as numerical stability {issues} and decoding complexity. In particular, decoding of MDS codes is not robust against unavoidable rounding errors when used over real numbers
\cite{higham2002accuracy}. Quantizing the real-valued data and mapping them to a finite field over which the computations are carried out  
%, e.g., coded matrix multiplication using \textit{polynomial codes} in
\cite{yu2019lagrange} can be an alternative approach. However, performing computations over  finite fields imposes further numerical barriers due to overflow errors when used over real-valued data.

As we will show in Section III, MDS codes are \textit{theoretically} optimal in terms of minimizing the average execution time of any linear-coded distributed computing system. However, as discussed above,
%they are \textit{not practical}, 
their application comes with some practical impediments,
either when used over real-valued inputs or large finite fields, in most of coded computing applications comprised of large number of local nodes. A sub-optimal yet practically interesting approach is to apply binary linear codes, 
%{consisting of $-1$'s  and $1$'s}, 
{with generator matrices consisting of $-1$'s  and $1$'s}, 
and then perform the computation over real numbers. In this case, there is no need for the quantization as  the encoded tasks sent to the worker nodes are  obtained from a linear combination of the uncoded tasks merely involving additions and subtractions.  %a zero in the $(i,j)$-th element of the generator matrix of the binary linear code means that the $i$-th task is not included in the $j$-th encoded task sent to the $j$-th node while a one means it is included.
Inspired by this, in this paper, we consider $(n,k)$ binary linear codes where all computations are performed over real-valued data inputs. To this end, we first derive several fundamental limits to characterize the performance of coded computing schemes employing binary linear codes. We then investigate Reed-Muller (RM) coded computation enabled by our proposed low-complexity algorithm for decoding RM codes over erasure channels. Our decoding algorithm is specifically designed to work over real-valued data and only involves additions, subtractions, {and inversion of relatively small matrices of dimensions at most $1+ \log n$}.

\subsection{Related Work}\label{sec:related}
Coded computing paradigm provides a framework to address critical issues that arise in large-scale distributed computing and learning problems such as stragglers, security, and privacy by combining coding theory and distributed computing. %the effect of stragglers Studying coded computing techniques to overcome shortcomings of There has been extensive research activities to
%leverage coding schemes in order to boost the performance of distributed computing systems. %\cite{lee2018speeding,li2016unified,reisizadeh2017coded,li2016coded,li2017coding,yang2017computing,lee2017high,dutta2016short,vulimiri2013low,wang2018coded,mallick2018rateless}. 
For instance, in \cite{lee2018speeding,li2016unified,yu2020straggler,reisizadeh2019coded,li2016coded,li2017coding,yang2017computing,lee2017high},  coding theoretic techniques have been utilized to combat the deteriorating effects of stragglers in coded computing schemes. The adaptation of such protocols to the analog domain often results in numerical instability, i.e., the accuracy in the computation outcome drops significantly when the number of servers grows large. Furthermore, coded computing schemes have been proposed that enable data privacy and security \cite{yu2018lagrange,yu2020entangled,aliasgari2020private,d2020gasp,bitar2019private,nodehi2019secure,soleymani2021list}. However, in these prior works the data is first quantized and then mapped to a finite field where the tools from coding theory over finite fields can be applied. The performance of such schemes also drops sharply when the dataset size passes a certain threshold due to overflow errors \cite{soleymani2020privacy, soleymani2020analog}.        % For example, \cite{lee2018speeding} has applied coding theory to combat the deteriorating effects of stragglers in matrix multiplication and data shuffling. The authors in \cite{reisizadeh2017coded} considered coded distributed computing in heterogeneous clusters consisting of servers with different computational capabilities. 

There is another line of work concerning the adoption of coded computing schemes for straggler mitigation in the analog domain  \cite{yu2020entangled,aliasgari2020private,d2020gasp,bitar2019private,nodehi2019secure}. Such schemes offer numerical stability but their decoding procedure often relies on inverting a certain matrix which is not scalable with the number of servers. In \cite{jahani2020berrut}, the authors provide a framework for approximately recovering the evaluations of a function, not necessarily a polynomial, over a dataset which is  numerically stable and robust against stragglers. Recently, coded computing schemes have been proposed that enable privacy in the analog domain \cite{soleymani2020privacy, soleymani2020analog}. Also, codes in the analog domain have been recently studied in the context of block codes \cite{roth2020analog} as well as subspace codes \cite{soleymani2019analog} for analog error correction.   

%An important contribution of this paper is the investigation of RM-coded computation enabled by our proposed low-complexity algorithm for decoding RM codes over BECs.
On the other hand, a related work to our RM-coded computing scheme is the recent work in \cite{bartan2019polar} where binary polar codes are applied for distributed matrix multiplication by extending the successive cancellation (SC) decoder of polar codes for real-valued data inputs.
%The authors in \cite{bartan2019polar} justify the application of binary linear codes over real-valued data and provide a decoding algorithm using polar decoder.
However, to the best of our knowledge, our paper is the first to study coded computation over RM codes. As we will show in this paper (see Section \ref{sec:simulations}), RM-coded computation significantly outperforms polar-coded computation in terms of the average execution time. Despite the observations of excellent performance for RM codes in various disciplines (e.g., capacity-achievability \cite{kudekar2017reed,abbe2015reed} and scaling laws \cite{hassani2018almost}), a critical aspect of RM codes is still the lack of efficient decoding algorithms that are scalable to general code parameters with low complexity. Very recently, \cite{ye2020recursive} proposed a recursive projection-aggregation (RPA) algorithm for decoding RM codes over binary symmetric channels (BSCs) and general binary-input memoryless channels. However, neither \cite{ye2020recursive} nor the earlier works on decoding RM codes \cite{reed1954class,dumer2006soft,saptharishi2017efficiently,santi2018decoding} directly apply to distributed computation over real-valued data. 

\subsection{Our Contributions}\label{sec:contribution}
In this work, we aim at making a strong connection between the problem of characterizing the average execution time of a coded distributed computing system and the fundamental problem of channel coding over erasure channels. The main objective of this paper is twofold: 1) characterizing the fundamental performance limits of coded distributed computing systems employing binary linear codes, and 2) designing practical schemes, building upon binary linear codes, that adapt to the natural constraints imposed by the coded computation applications (e.g., operating over real-valued data) while, in the meantime, achieving very close to the fundamental performance limits with a low complexity.
%We believe that the proposed framework in this paper can enable efficient design of distributed computing systems given the rich literature in the channel coding theory. 
The main contributions of the paper are summarized as follows.
\begin{itemize}
    \item We connect the problem of characterizing the average execution time of any coded distributed computing system to the error probability of the underlying coding scheme over $n$ uses of erasure channels (see \Lref{lemma1}).
    \item Using the above connection, we characterize the performance limits of distributed computing systems such as the average execution time that any coded computation scheme can achieve (see \Tref{BLC}), the average job completion time using binary random linear codes (see \Cref{RC}), and the best achievable average execution time of a coded computation scheme (see \Cref{col5}) that can, provably, be attained using MDS codes requiring operations over large finite fields.
    \item We establish the existence of binary linear codes that attain, asymptotically, the best performance of a coded computing scheme. This important {result} is established by studying the gap between the average execution time of binary random linear codes and the optimal performance (see \Tref{thm7}), and then showing that the normalized gap approaches zero as $n \rightarrow \infty$ (see \Cref{col8} and \Cref{execution_time_prob}).
    \item By studying the numerical stability of the coded computing schemes utilizing binary linear codes, we show that there exist binary linear codes that are \emph{ numerically stable} against the inevitable rounding errors in practice  while, in the meantime, having an asymptotically optimal average execution time (see \Tref{good_binary_codes_exist}).
    \item We develop an efficient low-complexity algorithm for decoding RM codes over erasure channels. Our decoding algorithm is specifically designed for distributed computation over real-valued data by avoiding any operation over finite fields. Moreover, our decoder is able to achieve very close to the performance of the optimal maximum a posteriori (MAP) decoder (see Section \ref{sec:RM}).
    \item  Enabled by our low-complexity decoder, we study the performance of RM-coded distributed computing systems. We also investigate polar-coded computation.
    \item We carry out extensive numerical analysis confirming our theoretical observations and demonstrating the excellence of RM-coded computation using our proposed decoder.
    
\end{itemize}

The rest of the paper is organized as follows. In Section \ref{sec:sys_model}, we provide the system model and clarify how the system of $n$ independent distributed servers can be viewed as $n$ independent uses of erasure channels. In Section \ref{sec:ave_time}, by connecting the problem of coded computation to the well-established problem of channel coding over erasure channels, we characterize fundamental limits of coded computation using binary linear codes. In Section \ref{sec:stability}, we study the numerical stability of coded computing schemes employing binary linear codes. In Section \ref{sec:practical_codes}, we investigate RM- and polar-coded computation following the presentation of our low-complexity algorithm for decoding RM codes over erasure channels. Finally, we present comprehensive numerical results in Section \ref{sec:simulations}, and conclude the paper in Section \ref{conclusions}.

\section{System Model}\label{sec:sys_model}
We consider a distributed computing system consisting of $n$ local nodes with the same computational capabilities. The run time $T_i$ of each local node $i$ is modeled using a shifted-exponential random variable (RV), mainly adopted in the literature \cite{liang2014tofec,reisizadeh2017coded,lee2018speeding}. Then, when the computational job is equally divided to $k$ tasks, the cumulative distribution function (CDF) of $T_i$ is given by
\begin{align}\label{shifted-exp}
\Pr(T_i\leq t)=1-\exp\left(-\mu(kt-1)\right),~~~\forall ~\!t\geq 1/k,
\end{align}
where $\mu$ is the exponential rate of each local node, also called the straggling parameter. Using \eqref{shifted-exp} one can observe that the probability of the task assigned to the $i$-th server not being completed (equivalent to erasure) until time $t\geq1/k$ is
\begin{align}\label{epst}
\epsilon(t)\triangleq\Pr(T_i>t)=\exp\left(-\mu(kt-1)\right),
\end{align}
and is one for $t<1/k$. Therefore, given any time $t$, the problem of computing $k$ parts of the computational job over $n$ servers can be interpreted as the traditional problem of transmitting $k$ symbols, using an $(n,k)$ code, over $n$ independent-and-identically-distributed (i.i.d.) erasure channels. Note that the form of the CDF in \eqref{shifted-exp} suggests that $t_0\triangleq 1/k$ is the (normalized) deterministic time required for each server to process its assigned $1/k$ portion of the total job (all tasks are erased before $t_0$), while any time elapsed after $t_0$ refers to the stochastic time as a result of servers' statistical behavior (tasks are not completed with probability $\epsilon(t)$ for $t\geq t_0$).

Given a certain code and a corresponding decoder over erasure channels, a \textit{decodable} set of tasks refers to a pattern of unerased symbols resulting in a successful decoding with probability $1$. Then, $P_e(\epsilon,n)$ is defined as the probability of decoding failure over an erasure channel with erasure probability $\epsilon$. For instance, $P_e(\epsilon,1) = \epsilon$ for a $(1,1)$ code. Note that the reason to keep $n$ in the notation is to specify that the number of servers, when the code is used in distributed computation, is also $n$. Finally, the total job completion time $T$ is defined as the time at which a decodable set of tasks/outputs is obtained from the servers.

\section{Fundamental Limits}\label{sec:ave_time}
%\subsection{Non-Asymptotic Analysis}
%Using the system model described in Section II, we have the following lemma for the average execution time of any homogeneous distributed computing scheme.
In this section, we first connect the problem of characterizing the average execution time of any coded distributed computing system to the error probability of the underlying coding scheme over $n$ uses of erasure channels. We then derive several performance limits of coded computing systems such as their average execution time, their average job completion time using binary random linear codes, and their best achievable performance.
Finally, we study the gap between the average execution time of binary random linear codes and the optimal performance to establish the existence of binary linear codes that attain, asymptotically, the best performance of a coded computing scheme.

The following Lemma connects the average execution time of {any} linear-coded distributed computing system to the error probability of the underlying coding scheme over $n$ uses of an erasure channel. 
\begin{lemma}\label{lemma1}
	The average execution time of a coded distributed computing system using a given $(n,k)$ linear code can be characterized as
%	\begin{align}\label{ET}
%	T_{\rm avg}=\E[T]&\stackrel{(a)}{=}\int_{0}^{\infty}P_e(\tau,n)d\tau\nonumber\\
%	&\stackrel{(b)}{=}\frac{1}{k}+\frac{1}{\mu k}\int_{0}^{1}\frac{P_e(\epsilon,n)}{\epsilon}d\epsilon,
%	\end{align}
	\begin{align}
T_{\rm avg}\triangleq\E[T]&=\int_{0}^{\infty}P_e(\epsilon(\tau),n)d\tau\label{ET1}\\
&=\frac{1}{k}+\frac{1}{\mu k}\int_{0}^{1}\frac{P_e(\epsilon,n)}{\epsilon}d\epsilon\label{ET2},
\end{align}
where $\epsilon(\tau)$ is defined in \eqref{epst}.
%where $T$ is the job completion time,
	%i.e., the time for which a decodable set of outputs is received,
%	and $P_e(\epsilon,n)$ is the error probability of an $(n,k)$ coding scheme over an erasure channel with erasure probability $\epsilon$.
\end{lemma}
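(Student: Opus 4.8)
The plan is to combine the standard layer-cake identity for the expectation of a nonnegative random variable with the erasure-channel interpretation set up in Section~\ref{sec:sys_model}. First I would write $\E[T]=\int_{0}^{\infty}\Pr(T>\tau)\,d\tau$, valid since $T\ge 0$, and then identify the tail $\Pr(T>\tau)$ with $P_e(\epsilon(\tau),n)$. For the identification, fix $\tau\ge 0$: by \eqref{epst} and the independence of the run times $T_1,\dots,T_n$, the task at server $i$ is still incomplete at time $\tau$ — i.e., erased — independently across $i$ with probability $\epsilon(\tau)$, so the set of \emph{completed} tasks at time $\tau$ has exactly the distribution of the unerased coordinates of $n$ i.i.d.\ erasure channels with erasure probability $\epsilon(\tau)$. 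Because a completed task stays completed, this set is nondecreasing in $\tau$, and because the family of decodable patterns of a linear code under a fixed (sensible) decoder is upward closed — extra unerased symbols can simply be ignored — the event $\{T>\tau\}$ that no decodable pattern has appeared by time $\tau$ is exactly the event that the time-$\tau$ pattern is undecodable. Hence $\Pr(T>\tau)=P_e(\epsilon(\tau),n)$, which is \eqref{ET1}.

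Next I would convert \eqref{ET1} into \eqref{ET2} by splitting the integral at $t_0=1/k$. On $[0,1/k)$ we have $\epsilon(\tau)=1$, every task is erased, and no $k\ge 1$ message symbols can be recovered from the empty pattern, so $P_e(1,n)=1$ and this part contributes $\int_0^{1/k}1\,d\tau=1/k$. On $[1/k,\infty)$ I would substitute $\epsilon=\epsilon(\tau)=\exp(-\mu(k\tau-1))$, a strictly decreasing bijection onto $(0,1]$ with inverse $\tau=\tfrac{1}{k}\bigl(1-\tfrac{1}{\mu}\ln\epsilon\bigr)$ and $d\tau=-\tfrac{1}{\mu k\epsilon}\,d\epsilon$; this turns $\int_{1/k}^{\infty}P_e(\epsilon(\tau),n)\,d\tau$ into $\tfrac{1}{\mu k}\int_{0}^{1}\tfrac{P_e(\epsilon,n)}{\epsilon}\,d\epsilon$. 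Adding the two contributions gives \eqref{ET2}.

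The calculation itself is routine; the step that needs the most care is the identification $\Pr(T>\tau)=P_e(\epsilon(\tau),n)$, since it quietly uses two monotonicity facts — completed tasks only accumulate over time, and decodable erasure patterns form an upward-closed family — together with the fact that, at a \emph{fixed} time $\tau$, the instantaneous erasure pattern is i.i.d.\ $\mathrm{Bernoulli}(\epsilon(\tau))$ by \eqref{epst} and the independence of the $T_i$. A minor loose end is finiteness of the integrals: since an all-unerased pattern is always decodable, $P_e(\epsilon,n)\le 1-(1-\epsilon)^{n}\le n\epsilon$, so $P_e(\epsilon,n)/\epsilon\le n$ is bounded near $0$; this shows $T_{\rm avg}<\infty$ and that the integral in \eqref{ET2} converges.
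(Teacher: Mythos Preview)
Your proof is correct and follows essentially the same approach as the paper: the tail formula $\E[T]=\int_0^\infty \Pr(T>\tau)\,d\tau$, the identification $\Pr(T>\tau)=P_e(\epsilon(\tau),n)$ via the erasure-channel interpretation, and the change of variables $\epsilon=\epsilon(\tau)$ after splitting at $\tau=1/k$. You are simply more explicit than the paper about the monotonicity facts underlying the identification and about the integrability of $P_e(\epsilon,n)/\epsilon$ near $\epsilon=0$, which is fine.
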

\begin{proof}
It is well-known that the expected value of any {non-negative} RV $T$ is related to its CDF $F_T(\tau)$ as $\E[T]=\int_{0}^{\infty}(1-F_T(\tau))d\tau$. Note that $1-F_T(\tau)=\Pr(T>\tau)$ is the probability of the event that the job is not completed until some time $\tau$. Therefore, using the system model in Section II, we can interpret $\Pr(T>\tau)$ as the probability of decoding failure $P_e(\epsilon(\tau),n)$ of the code when used over $n$ i.i.d. erasure channels with the erasure probability $\epsilon(\tau)$. This completes the proof of \eqref{ET1}. Now given that for the shifted-exponential distribution $d\epsilon(\tau)/d\tau=-\mu k\epsilon(\tau)$, and that $P_e(\epsilon(\tau),n)=1$ for all $\tau\leq1/k$, we have \eq{ET2} by the change of variables.
\end{proof}

%\noindent \textbf{Remark 1.} Eq. \eqref{ET1} is true for any linear-coded distributed computing system, either homogeneous or heterogeneous\footnote{By a \textit{heterogeneous} computing system we mean that the servers can have different processing capabilities, e.g., different sets of $\mu$'s \cite{reisizadeh2017coded}.}, and with any distribution for the run time of the servers, while \eqref{ET2} is obtained for homogeneous distributed computing systems under shifted-exponential distribution and can be extended to other distributions in a similar approach. 
\noindent \textbf{Remark 1.} Note that \eqref{ET1} holds given any model for the distribution of the run time of the servers, while \eqref{ET2} is obtained under shifted-exponential distribution, with servers having a same straggling parameter $\mu$, and can be extended to other distributions in a similar approach. 

\begin{theorem}
%[Average Execution Time of Linear Codes]
\label{BLC}
	%	The average execution time of distributed computing systems using binary random linear codes can be expressed as \eqref{RC-ET} shown at the top of this page where 
	The average execution time of any coded distributed computing system can be expressed as 
	\begin{align}\label{BL}
	T_{\rm avg}=\frac{1}{k}\left[1+\sum_{i=n-k+1}^{n}\frac{1}{i\mu}\right]+\frac{1}{\mu k}\sum_{i=1}^{n-k}\frac{1}{i}p_{n,k}(i),
	\end{align} 
%	where $p_{n,k}(i)$ is the probability (in the case of deterministic generator matrices it should be interpreted as the normalized number of possibilities) of having a row rank smaller than $k$ for the generator matrix after dropping/erasing $i$ columns uniformly at random. 
	where $p_{n,k}(i)$
%	, in the case of deterministic generator matrices,
%	is  the number of possibilities of having a row rank smaller than $k$ for the generator matrix after dropping/erasing $i$ columns uniformly at random, normalized to the total $\begin{pmatrix}
%	n\\i
%	\end{pmatrix}$ choices of $i$ columns.
	is  the average conditional probability  of decoding failure of an $(n,k)$ linear code, for an underlying decoder, given that $i$ encoded symbols are erased at random   where the average is taken over all possible erasure patterns with $i$ erased symbols. \color{black}
\end{theorem}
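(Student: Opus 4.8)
The plan is to start from the integral identity \eqref{ET2} established in Lemma \ref{lemma1} and to expand $P_e(\epsilon,n)$ according to the number of erased coordinates. Over $n$ i.i.d.\ erasure channels each with erasure probability $\epsilon$, the number of erased symbols is $\mathrm{Binomial}(n,\epsilon)$, and conditioned on there being exactly $i$ erasures all $\binom{n}{i}$ erasure patterns of that size are equally likely. Averaging the conditional failure probability over those patterns gives, by definition of $p_{n,k}(i)$,
\[
P_e(\epsilon,n)=\sum_{i=0}^{n}\binom{n}{i}\epsilon^i(1-\epsilon)^{n-i}\,p_{n,k}(i).
\]
First I would record the trivial values of $p_{n,k}(i)$ at the two ends: with no erasures the entire codeword is observed, so $p_{n,k}(0)=0$; and whenever more than $n-k$ coordinates are erased, fewer than $k$ coordinates remain, which is information-theoretically insufficient to recover the $k$-dimensional message, so decoding fails for every such pattern and every decoder, giving $p_{n,k}(i)=1$ for $n-k+1\le i\le n$. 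Substituting these splits the sum into a code-independent tail $\sum_{i=n-k+1}^{n}\binom{n}{i}\epsilon^i(1-\epsilon)^{n-i}$ and the code-dependent part $\sum_{i=1}^{n-k}\binom{n}{i}\epsilon^i(1-\epsilon)^{n-i}p_{n,k}(i)$.

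Next I would substitute this into \eqref{ET2} and integrate term by term. The one computation that does any work is the elementary Beta integral
\[
\int_{0}^{1}\frac{1}{\epsilon}\binom{n}{i}\epsilon^i(1-\epsilon)^{n-i}\,d\epsilon
=\binom{n}{i}\,\frac{(i-1)!\,(n-i)!}{n!}=\frac{1}{i},
\]
valid for every $i\ge 1$. Because $p_{n,k}(0)=0$, the only potentially divergent summand (the one with $\epsilon^{-1}(1-\epsilon)^n$) is absent, so exchanging the finite sum with the integral is legitimate and no convergence issue arises. Collecting the resulting terms yields
\[
T_{\rm avg}=\frac{1}{k}+\frac{1}{\mu k}\sum_{i=n-k+1}^{n}\frac{1}{i}+\frac{1}{\mu k}\sum_{i=1}^{n-k}\frac{1}{i}\,p_{n,k}(i),
\]
and pulling $1/k$ out of the first two terms gives exactly \eqref{BL}.

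The steps are all short, so the ``hard part'' is really just getting two bookkeeping points airtight. The first is the claim $p_{n,k}(i)=1$ for $i>n-k$: I would phrase it as a rank/dimension count so that it is manifestly valid for an arbitrary $(n,k)$ linear code and an arbitrary underlying decoder, consistent with the generality claimed in the statement. The second is the evaluation and the near-$\epsilon=0$ behavior of the Beta integral, which is clean precisely because the $i=0$ term vanishes; I would flag explicitly that this is what licenses the termwise integration. Everything else is routine binomial and Gamma/Beta-function algebra.
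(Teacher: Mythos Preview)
Your proposal is correct and follows essentially the same approach as the paper: expand $P_e(\epsilon,n)$ by conditioning on the number of erasures, use $p_{n,k}(i)=1$ for $i>n-k$, and integrate termwise via the Beta integral $\int_0^1 \epsilon^{i-1}(1-\epsilon)^{n-i}\,d\epsilon = 1/\bigl(i\binom{n}{i}\bigr)$. The only cosmetic differences are that the paper derives this integral by a one-line integration-by-parts recursion rather than invoking the Beta function, and it does not separately discuss the $i=0$ term or the interchange of sum and integral that you (rightly) flag as bookkeeping.
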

\begin{proof}
Using the law of total probability and the definition of $p_{n,k}(i)$ we have
\begin{align}\label{BL-MAP}
P_e(\epsilon,n)=\sum_{i=1}^{n}\binom{n}{i}\epsilon^{i}(1-\epsilon)^{n-i}p_{n,k}(i).
\end{align}
Accordingly, characterizing $T_{\rm avg}$ requires computing integrals of the form $f_i\triangleq\int_{0}^{1}\epsilon^{i-1}(1-\epsilon)^{n-i}d\epsilon$ for $i=1,2,...,n$. Using part-by-part integration, one can find the recursive relation $f_{i+1}=\frac{i}{n-i}f_i$ which results in $1/f_i=i\binom{n}{i}$. Note that $p_{n,k}(i)=1$ for $i>n-k$, since one cannot extract the $k$ parts of the original job from less than $k$ encoded symbols. Then plugging \eqref{BL-MAP} into \eqref{ET2} leads to \eqref{BL}.
\end{proof}

Next, we characterize the average execution time using a random ensemble of binary linear codes over $\{\pm 1\}^n$. In this paper, when referred to binary linear codes over real numbers, we always consider codes whose generator matrices only contain $\pm 1$ entries.
%The generator matrix/codewords of such code can be regarded as the image of that of a binary linear block code under the mapping  $\phi:\mathbb{F}_2 \rightarrow \R$ where $\phi(0)=1$ and $\phi(1)=-1$.
The aforementioned random ensemble, denoted by $\cR(n,k)$, is obtained by picking entries of the $k \times n$ generator matrix independently and uniformly at random followed by removing those matrices that do not have a full row rank from the ensemble. 

\noindent\textbf{Remark 2.} Note that \eqref{BL-MAP} together with the integral form in \eqref{ET2} suggest that a coded computing system should always encode with a full-rank generator matrix. Otherwise, the average execution time does not converge. This is the reason behind picking the particular ensemble described above. Note that this is in contrast with the conventional block coding, where we can get an arbitrarily small average probability of error over a random ensemble of all $k \times n$ binary generator matrices.

The following lemma provides an upper bound on the probability that a vector picked from $\{\pm1\}^n$ at random lies in a given subspace of $\R^n$. We utilize this result later to characterize $p_{n,k}(i)$, defined in \Tref{BLC}, for a random code chosen from $\mathcal{R}(n,k)$.

\begin{lemma}\label{prob_bound}
\textup{(\cite[Corollary 4]{kahn1995probability})} For a subspace $V$ of $\R^n$ and $\br$ chosen uniformly at random from $\{\pm1\}^n$ we have 
\be{p_bound}
\Pr \{\br \in V\} \leq 2^{-\dim({V^\perp})},
\ee
where $V^\perp$ is the orthogonal complement of $V$, and $\dim({V^\perp})$ denotes the dimension of ${V^\perp}$.
\end{lemma}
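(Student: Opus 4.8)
The plan is to reduce the event $\{\br\in V\}$ to a conjunction of $d\triangleq\dim(V^\perp)$ independent coin-flip conditions. First I would note that $\br\in V$ if and only if $\langle\br,\bw\rangle=0$ for every $\bw$ in a basis of $V^\perp$. Since $V^\perp$ is $d$-dimensional, writing any basis as the rows of a $d\times n$ matrix gives a matrix of rank $d$, so its column rank is $d$ and there is a set $S$ of $d$ coordinate positions indexing $d$ linearly independent columns; equivalently, the coordinate projection $\br\mapsto(r_i)_{i\in S}$ restricted to $V^\perp$ is an isomorphism onto $\R^d$. Using this, I would replace the basis by the unique \emph{reduced} basis $\bw_1,\dots,\bw_d$ of $V^\perp$ that is dual to the coordinates in $S$: writing $S=\{i_1,\dots,i_d\}$, arrange (by Gauss--Jordan elimination with pivots in the columns indexed by $S$) that $(\bw_j)_{i_\ell}=1$ if $j=\ell$ and $0$ otherwise.

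With such a basis in hand, the $j$-th constraint $\langle\br,\bw_j\rangle=0$ reads $r_{i_j}=-\sum_{m\notin S}r_m(\bw_j)_m$; that is, it pins the single coordinate $r_{i_j}$ to a linear function of the coordinates $(r_m)_{m\notin S}$ alone. The next step is a conditioning argument: fix the values of $(r_m)_{m\notin S}$. Under this conditioning the $d$ coordinates $r_{i_1},\dots,r_{i_d}$ remain mutually independent and uniform on $\{\pm1\}$ (the entries of $\br$ are i.i.d.), while each of the $d$ constraints forces $r_{i_j}$ to equal one specific real number; the conditional probability that a uniform $\pm1$ variable equals a prescribed real is at most $1/2$ (it is $1/2$ when that real lies in $\{\pm1\}$ and $0$ otherwise). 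By independence, the conditional probability that all $d$ constraints hold is at most $2^{-d}$, and since this bound is uniform over the conditioning, the law of total probability yields $\Pr\{\br\in V\}\le 2^{-d}$, as desired.

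I expect no genuine obstacle here: the only content is the linear-algebra preprocessing — extracting $d$ independent columns from a rank-$d$ matrix and reducing to the dual basis on those pivot coordinates — which is routine, after which the probabilistic step is immediate. The one point worth stating carefully is exactly which independence is being invoked in the conditioning step, namely that the block $(r_i)_{i\in S}$ is independent of the block $(r_m)_{m\notin S}$ and has independent coordinates, both of which follow from the coordinates of $\br$ being mutually independent. This is precisely Corollary~4 of \cite{kahn1995probability}, and the argument above is the standard one, so in the paper it suffices to invoke the reference.
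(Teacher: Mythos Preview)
Your argument is correct and is indeed the standard proof of this fact. The paper itself does not prove the lemma at all; it simply records the statement and attributes it to \cite[Corollary~4]{kahn1995probability}, exactly as you anticipate in your final sentence.
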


Next, this result is utilized to provide an upper bound on $p_{n,k}(i)$ for a code whose generator matrix is picked from $\mathcal{R}(n,k)$ uniformly at random.

\begin{lemma}
\label{lemma2}
%[$p_{n,k}(i)$ for Binary Random Linear Codes]
The probability that the generator matrix of a code picked from $\mathcal{R}(n,k)$ does not remain full row rank after erasing $i$ columns uniformly at random, denoted by $p^{\mathcal{R}}_{n,k}(i)$, can be upper bounded as
\begin{align}\label{RCpe}
p^{\mathcal{R}}_{n,k}(i)\leq 1-{\prod_{j=1}^{k}\left(1-2^{j-1-n+i}\right)}.
\end{align}
\end{lemma}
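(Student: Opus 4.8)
The plan is to build the $k\times(n-i)$ residual generator matrix row by row, using \Lref{prob_bound} at each step to lower bound the probability that the newly added row does not fall into the span of the previous ones, and then to transfer the resulting bound from the plain i.i.d.\ $\pm1$ ensemble to the conditioned ensemble $\mathcal{R}(n,k)$ by a short monotonicity argument.

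First I would use a symmetry reduction. Since the erased set of $i$ columns is chosen independently of the matrix entries, it suffices to analyze the restriction to a \emph{fixed} set $S$ of $n-i$ surviving columns (say the first $n-i$), and then average. Let $G$ be a $k\times n$ matrix with i.i.d.\ uniform $\pm1$ entries (the unconditioned ensemble), and let $G_S$ be its restriction to $S$, whose rows $\br_1,\dots,\br_k$ are i.i.d.\ uniform on $\{\pm1\}^{n-i}$. Let $E_j$ be the event that $\br_1,\dots,\br_j$ are linearly independent. Conditioning on $E_{j-1}$ and on a realization of $\br_1,\dots,\br_{j-1}$, the subspace $V=\Span{\br_1,\dots,\br_{j-1}}\subseteq\R^{\,n-i}$ has dimension $j-1$, hence $\dim(V^\perp)=n-i-(j-1)$; since $\br_j$ is independent of $\br_1,\dots,\br_{j-1}$ and uniform on $\{\pm1\}^{n-i}$, \Lref{prob_bound} gives $\Pr\{\br_j\in V\}\le 2^{-(n-i-j+1)}=2^{\,j-1-n+i}$. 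Therefore $\Pr(E_j\mid E_{j-1})\ge 1-2^{\,j-1-n+i}$, and multiplying over $j=1,\dots,k$ yields
\[
\Pr\bigl(G_S\ \text{has full row rank}\bigr)=\Pr(E_k)\ \ge\ \prod_{j=1}^{k}\bigl(1-2^{\,j-1-n+i}\bigr).
\]
(For $i>n-k$ the right-hand side is non-positive and the bound is vacuous, matching $p^{\mathcal{R}}_{n,k}(i)=1$ in that regime.)

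Next I would transfer this to $\mathcal{R}(n,k)$. The key observation is that $G_S$ having full row rank \emph{implies} $G$ has full row rank, since deleting columns cannot increase rank. Hence
\[
\Pr\bigl(G_S\ \text{full rank}\ \big|\ G\ \text{full rank}\bigr)=\frac{\Pr(G_S\ \text{full rank})}{\Pr(G\ \text{full rank})}\ \ge\ \Pr(G_S\ \text{full rank}),
\]
the last step because the denominator is at most $1$. A matrix drawn from $\mathcal{R}(n,k)$ is precisely $G$ conditioned on being full rank, and erasing $i$ uniformly random columns of it amounts to restricting to a uniformly random surviving set $S$ of size $n-i$; since $G$ has i.i.d.\ entries, $\Pr(G_S\ \text{full rank})$ depends only on $|S|=n-i$, so averaging the displayed inequality over $S$ gives $1-p^{\mathcal{R}}_{n,k}(i)\ge\prod_{j=1}^{k}\bigl(1-2^{\,j-1-n+i}\bigr)$, which is exactly \eqref{RCpe}.

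I expect the only genuinely delicate point to be the conditioning in the row-by-row step: \Lref{prob_bound} must be invoked for a subspace $V$ that is \emph{frozen} while the fresh row $\br_j$ is still uniformly distributed on $\{\pm1\}^{n-i}$. This is resolved by conditioning on the realization of $\br_1,\dots,\br_{j-1}$ (which pins down $V$) and using the independence of $\br_j$ from the earlier rows; because the bound $2^{-\dim(V^\perp)}$ depends only on the \emph{dimension} of $V$ and not on which subspace it is, it survives taking the expectation over the conditioning. The remaining ingredients---the telescoping of the product $\prod_j\Pr(E_j\mid E_{j-1})$ and the one-line monotonicity estimate for the conditioned ensemble---are routine, and notably nothing about the surviving columns is used beyond their count, which is why only $n-i$ appears in the final bound.
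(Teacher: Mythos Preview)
Your proof is correct and follows essentially the same approach as the paper: a row-by-row build-up of the $k\times(n-i)$ residual matrix using \Lref{prob_bound} at each step, followed by the observation that excluding rank-deficient matrices from the ensemble can only help (since $G_S$ full rank implies $G$ full rank). Your write-up is in fact more careful than the paper's on both the conditioning needed to apply \Lref{prob_bound} and the monotonicity transfer to $\mathcal{R}(n,k)$, but the underlying argument is the same.
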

\begin{proof}
Define $l(m,k)$, {$k\leq m$}, as the probability of $k$ signed Bernoulli uniform random vectors $\mathbf{v}_i\in\{\pm1\}^m$ being linearly independent.  Let $V_j$ denote the subspace spanned by $\bv_1, \cdots, \bv_j$. Then one can write 
\begin{align}
    &l(m,j+1)=\\
    &l(m,j)\Pr[\bv_{j+1} \notin V_j| \bv_1, \cdots, \bv_j \ \small{\text{are linearly independent}}] \label{total-prob}\\
    &\geq l(m,j) (1-2^{j-m}),\label{lem}
\end{align}
where \eqref{total-prob} is by the law of total probability and \eqref{lem} is by the result of \Lref{prob_bound}. Note also that $l(m,1)=1> 1-2^{-m}$. Combining this together with \eqref{lem} results in 
\color{black} %Then, by utilizing the result of \Lref{prob_bound}, one can write 
	\begin{align}\label{Pf}
	l(m,k)\geq\prod_{i=1}^{k}\left(1-2^{i-1-m}\right).
	\end{align}
 Note that  	
$
p^{\mathcal{R}}_{n,k}(i)\leq 1-l(n-i,k),
$
since rank-deficient $k \times n$ matrices are already excluded from $\mathcal{R}(n,k)$.
\color{black}
Combining this together with \eqref{Pf} completes the proof.
% Let $\tilde{\mathbf{G}}$ denote the  $k\times (n-i)$ matrix after removing $i$ columns of the $k\times n$ generator matrix $\mathbf{G}$ uniformly at random. 
%Now, given the definition of $p^{\rm RC}(i,k)$ we have
% Then
% \begin{align}
% p^{\mathcal{R}}_{n,k}(i)&=\Pr\!\left(\big\{\mathrm{rank}(\tilde{\mathbf{G}})\neq k\big\}{\big |}\big\{\mathrm{rank}({\mathbf{G}})=k\big\}\right)\label{eq1}\\
% &=1-\frac{\Pr\!\big(\big\{\mathrm{rank}(\tilde{\mathbf{G}})= k\big\}\big)}{\Pr\left(\big\{\mathrm{rank}({\mathbf{G}})=k\big\}\right)}\label{eq2}\\
% %&=1-\frac{l(n-i,k)}{l(n,k)}\label{eq3}\\
% &=1-\frac{\prod_{j=1}^{k}\left(1-2^{j-1-n+i}\right)}{\prod_{l=1}^{k}\left(1-2^{l-1-n}\right)},\label{eq4}
% \end{align}
% where \eqref{eq1} is by the definition of $p^{\mathcal{R}}_{n,k}(i)$, \eqref{eq2} is by noting that $\Pr\!\big(\big\{\mathrm{rank}({\mathbf{G}})= k\big\}{\big |}\big\{\mathrm{rank}(\tilde{\mathbf{G}})=k\big\}\big)=1$, and \eqref{eq4} is by \eqref{Pf}.
\end{proof}
\begin{corollary}
%[Random Coding Average Execution Time]
\label{RC}
	The average execution time using binary random linear codes from the ensemble $\cR(n,k)$ under maximum a posteriori (MAP) decoding is upper bounded by \eqref{BL} while replacing $p_{n,k}(i)$ in \eqref{BL} by $p^{\mathcal{R}}_{n,k}(i)$, upper bounded in \Lref{lemma2}.
\end{corollary}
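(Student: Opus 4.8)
The plan is to combine the closed-form expression of \Tref{BLC} with the ensemble bound of \Lref{lemma2}, exploiting the fact that \eqref{BL} is affine in the quantities $p_{n,k}(i)$ with nonnegative coefficients. Concretely, for a code $\mathcal{C}$ drawn from $\mathcal{R}(n,k)$ let $p_{n,k}^{\mathcal{C}}(i)$ denote its conditional decoding-failure probability given $i$ random erasures under MAP decoding, and let $T_{\rm avg}(\mathcal{C})$ be the corresponding average execution time, which by \Tref{BLC} is given by \eqref{BL} with $p_{n,k}(i)=p_{n,k}^{\mathcal{C}}(i)$. The object of interest is $\overline{T}_{\rm avg}\triangleq\E_{\mathcal{C}}[T_{\rm avg}(\mathcal{C})]$, the expectation over the ensemble. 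Since $T_{\rm avg}(\mathcal{C})$ depends on $\mathcal{C}$ only through the numbers $p_{n,k}^{\mathcal{C}}(i)$ and does so linearly, I would invoke linearity of expectation to conclude that $\overline{T}_{\rm avg}$ is again given by \eqref{BL}, now with each $p_{n,k}(i)$ replaced by $\E_{\mathcal{C}}[p_{n,k}^{\mathcal{C}}(i)]$. (Equivalently one may push $\E_{\mathcal{C}}$ through the integral in \eqref{ET1}; this is legitimate by Tonelli, the integrand being nonnegative, and the integral being finite for every $\mathcal{C}\in\mathcal{R}(n,k)$ by Remark 2.)

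The second step is to identify $\E_{\mathcal{C}}[p_{n,k}^{\mathcal{C}}(i)]$ with $p^{\mathcal{R}}_{n,k}(i)$. Here I would invoke the standard characterization of MAP decoding on the erasure channel: for a fixed linear code with generator matrix $G$, an erasure pattern is decodable with probability one precisely when the columns of $G$ indexed by the unerased coordinates span the full $k$-dimensional row space, i.e.\ have rank $k$. Hence $p_{n,k}^{\mathcal{C}}(i)$ is exactly the fraction of size-$i$ erasure patterns whose surviving columns of $G$ are rank-deficient. Averaging over $\mathcal{C}$ and using that the law of $G$ in $\mathcal{R}(n,k)$ is invariant under column permutations, I may fix the $i$ erased columns (say, the last $i$); the resulting probability that the remaining $k\times(n-i)$ submatrix fails to have full row rank is precisely $p^{\mathcal{R}}_{n,k}(i)$ as defined in \Lref{lemma2}. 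For $i>n-k$ both sides are trivially $1$, consistent with the summation range in \eqref{BL}.

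Finally, because the coefficient of $p_{n,k}(i)$ in \eqref{BL} is $1/(\mu k i)\ge 0$, monotonicity lets me replace $p^{\mathcal{R}}_{n,k}(i)$ by its upper bound $1-\prod_{j=1}^{k}(1-2^{j-1-n+i})$ from \Lref{lemma2} without decreasing the right-hand side, which yields the asserted upper bound on $\overline{T}_{\rm avg}$ and completes the argument.

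I do not anticipate a serious obstacle. The only points that need care are (i) the interchange of the ensemble expectation with the integral/sum, which is immediate from nonnegativity together with the finiteness guaranteed inside $\mathcal{R}(n,k)$ (so it is cleanest to reason directly with the finite affine formula \eqref{BL}); and (ii) making explicit that MAP decoding over the erasure channel fails exactly on rank-deficient survivor sets, so that $\E_{\mathcal{C}}[p_{n,k}^{\mathcal{C}}(i)]$ really is the quantity bounded in \Lref{lemma2} rather than something strictly larger. Everything else is bookkeeping.
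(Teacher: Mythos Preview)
Your proposal is correct and follows the same route as the paper: the key step is the identification of MAP failure over the erasure channel with rank deficiency of the surviving $k\times(n-i)$ submatrix, so that the ensemble-averaged $p_{n,k}(i)$ coincides with $p^{\mathcal{R}}_{n,k}(i)$, after which the bound of \Lref{lemma2} applies termwise in \eqref{BL}. The paper's proof is a one-line version of yours; your additional care with linearity of expectation, column-permutation invariance, and monotonicity in the coefficients makes explicit what the paper leaves implicit, but the argument is the same.
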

\begin{proof}
The proof is by noting that the optimal MAP decoder fails to recover the $k$ input symbols given $n-i$ unerased encoded symbols if and only if the corresponding $k\times(n-i)$ submatrix of the generator matrix of the code is not full row rank which occurs with probability $p^{\mathcal{R}}_{n,k}(i)$.
\end{proof}

\noindent\textbf{Remark 3.} \Tref{BLC} implies that the average execution time using linear codes consists of two terms. The first term is independent of the performance of the underlying coding scheme and is fixed given $k$, $n$, and $\mu$. However, the second term is determined by the error performance of the coding scheme, i.e.,  $p_{n,k}(i)$ for $i=1,2,...,n-k$, and hence, can be minimized by properly designing the coding scheme. 

The following corollary of \Tref{BLC} demonstrates that MDS codes, if they exist,\footnote{ It is in general an open problem whether given $n$, $k$, and $q$, there exists an $(n,k)$ MDS code over $\Fq$ \cite[Ch. 11.2]{macwilliams1977theory}. A non-RS type MDS code construction has been proposed in \cite{roth1989construction}. More recently, a construction of MDS codes with complementary duals has been proposed in \cite{beelen2018explicit} which has received attention due to applications in cryptography \cite{chen2017new,carlet2018euclidean}.  \color{black}} are optimal in the sense that they minimize the average execution time by eliminating the second term of the right hand side in \eqref{BL}. However, for a large number of servers $n$, the field size needs to be also large, e.g., $q > n$ for Reed-Solomon (RS) codes. 

\begin{corollary}[Optimality of MDS Codes]\label{col5}
For given $n$, $k$, and underlying field size $q$, an $(n,k)$ MDS code, if exists, achieves the minimum average execution time that can be attained by any $(n,k)$ code.
%\begin{align}
%T_{\rm avg}^{\rm MDS}&\stackrel{(a)}{=}\frac{1}{k}+\frac{1}{\mu k}\sum_{i=n-k+1}^{n}\frac{1-q^{n-i-k}}{i}\nonumber\\
%&\stackrel{(b)}{\approx}\frac{1}{k}+\frac{1}{\mu k}\ln\left(\frac{n}{n-k}\right).
%\end{align} 
\end{corollary}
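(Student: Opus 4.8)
The plan is to derive the statement directly from \Tref{BLC}, which has already separated the average execution time into a code-independent part and the single code-dependent part. Concretely, I would write
\begin{align*}
T_{\rm avg}=\underbrace{\frac1k\Bigl[1+\sum_{i=n-k+1}^{n}\frac{1}{i\mu}\Bigr]}_{=:T_0}+\frac{1}{\mu k}\sum_{i=1}^{n-k}\frac1i\,p_{n,k}(i),
\end{align*}
and make two observations: (i) $T_0$ depends only on $n$, $k$, and $\mu$, hence is the same for \emph{every} $(n,k)$ code and every underlying decoder; and (ii) the remaining sum is nonnegative, because each $p_{n,k}(i)\in[0,1]$ and each coefficient $1/(i\mu k)$ is positive. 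Together these give the universal lower bound $T_{\rm avg}\ge T_0$, with equality if and only if $p_{n,k}(i)=0$ for all $i\in\{1,\dots,n-k\}$.

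The second step is to check that an $(n,k)$ MDS code over $\Fq$, paired with the MAP (equivalently ML) erasure decoder, attains this bound. Here I would invoke the standard characterization that a linear code is MDS iff it has minimum distance $n-k+1$, which is equivalent to every $k$ columns of a generator matrix $G$ being linearly independent (every $k\times k$ submatrix of $G$ nonsingular). Given an erasure pattern of weight $i\le n-k$, the surviving $k\times(n-i)$ submatrix of $G$ has $n-i\ge k$ columns, among which some $k$ are linearly independent, so it has full row rank $k$; hence the MAP decoder recovers the $k$ input symbols with probability $1$. Averaging over all erasure patterns of weight $i$ yields $p_{n,k}(i)=0$ for every $i\le n-k$, so this scheme achieves $T_{\rm avg}=T_0$, matching the lower bound of the previous paragraph and thereby proving optimality.

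I do not expect a genuine obstacle: the substance is entirely carried by \Tref{BLC}, which isolates a manifestly nonnegative code-dependent term that, by definition of the MDS property, vanishes for $i\le n-k$. The only points needing care are bookkeeping ones. First, the lower bound $T_{\rm avg}\ge T_0$ must be asserted for an \emph{arbitrary} decoder, which is automatic from $p_{n,k}(i)\ge 0$, whereas the achievability direction must use a decoder that succeeds whenever recovery is information-theoretically possible — the MAP/ML decoder — since a structurally restricted decoder could in principle leave some $p_{n,k}(i)>0$ even for an MDS generator matrix. Second, I would either cite or give the one-line argument for the equivalence ``MDS $\Leftrightarrow$ every $k\times k$ submatrix of $G$ is nonsingular'' \cite[Ch. 11]{macwilliams1977theory}, so that the rank argument in the second step is self-contained.
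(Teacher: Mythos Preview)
Your proposal is correct and follows essentially the same approach as the paper: invoke \Tref{BLC}, observe that the second sum is nonnegative so the first term is a universal lower bound, and note that an MDS code makes every $p_{n,k}(i)$ vanish for $i\le n-k$. The paper's proof is terser---it simply cites the minimum distance $d_{\min}=n-k+1$ and the resulting ability to correct any $n-k$ erasures---whereas you spell out the generator-matrix rank argument and are explicit about the decoder, but the substance is identical.
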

\begin{proof}
MDS codes have the minimum distance of $d_{\rm min}^{\rm MDS}=n-k+1$ and can recover up to $d_{\rm min}^{\rm MDS}-1=n-k$ erasures leading to $p_{n,k}(i)=0$ for $i=1,2,...,n-k$. Therefore, the second term of \eq{BL} becomes zero for MDS codes and they achieve the following minimum average execution time that can be attained by any $(n,k)$  code:
\begin{align}\label{MDS1}
T_{\rm avg}^{\rm MDS}=\frac{1}{k}+\frac{1}{\mu k}\sum_{i=n-k+1}^{n}\frac{1}{i}.
\end{align} 

%Equivalently, the generator matrix of an MDS code remains full rank after removing up to any $n-k$ columns. 
%The optimality of MDS codes can also be proven noting that they achieve the error probability lower bound in \cite[Theorem 38]{poly} over large field sizes $q>n$. Note that Eq. \eqref{MDS1} is the same as the result established in \cite{lee2018speeding}, however, using a different methodology.
\end{proof}

Using \Tref{BLC} and Remark 3, and given that the generator matrix of any $(n,k)$ linear code with minimum distance $d_{\rm min}$ remains full rank after removing up to any $d_{\rm min}-1$ columns, we have the following proposition for the \textit{optimality criterion} in terms of minimizing the average execution time.
\begin{proposition}[Optimality Criterion]\label{prop6}
An $(n,k)$ linear code that minimizes $\sum_{i=d_{\rm min}}^{n-k}{p_{n,k}(i)/i}$ also minimizes the average execution time of a coded distributed computing system.
\end{proposition}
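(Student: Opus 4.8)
The plan is to combine Theorem~\ref{BLC} with two structural facts about linear codes and then observe that minimizing the relevant sum is equivalent to minimizing $T_{\rm avg}$. First I would recall from \eqref{BL} that the average execution time splits as
\begin{align*}
T_{\rm avg}=\frac{1}{k}\left[1+\sum_{i=n-k+1}^{n}\frac{1}{i\mu}\right]+\frac{1}{\mu k}\sum_{i=1}^{n-k}\frac{1}{i}p_{n,k}(i),
\end{align*}
so the first bracketed term depends only on $n$, $k$, $\mu$ and is invariant across all $(n,k)$ linear codes; hence minimizing $T_{\rm avg}$ is equivalent to minimizing $\sum_{i=1}^{n-k}p_{n,k}(i)/i$. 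The second step is to truncate the lower end of this sum: for any $(n,k)$ linear code with minimum distance $d_{\rm min}$, erasing fewer than $d_{\rm min}$ columns never destroys full row rank of the generator matrix (equivalently, any $\le d_{\rm min}-1$ erasures are correctable), so $p_{n,k}(i)=0$ for $i=1,\dots,d_{\rm min}-1$. Therefore the tail sum actually starts at $i=d_{\rm min}$, giving $\sum_{i=1}^{n-k}p_{n,k}(i)/i=\sum_{i=d_{\rm min}}^{n-k}p_{n,k}(i)/i$, and a code minimizing the latter minimizes $T_{\rm avg}$.

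The one subtlety worth spelling out is that $d_{\rm min}$ itself varies from code to code, so a priori the sum $\sum_{i=d_{\rm min}}^{n-k}p_{n,k}(i)/i$ is being evaluated over a code-dependent index range. I would handle this by noting that the statement is really about the full sum $\sum_{i=1}^{n-k}p_{n,k}(i)/i$, and that for each individual code the contribution of the indices $i<d_{\rm min}$ is identically zero; thus writing the objective as $\sum_{i=d_{\rm min}}^{n-k}p_{n,k}(i)/i$ is merely a notationally economical way of describing the same quantity, and the equivalence with minimizing $T_{\rm avg}$ is unaffected. No genuine obstacle arises here — the proof is essentially an unpacking of Theorem~\ref{BLC} plus the elementary coding-theoretic fact that $d_{\rm min}-1$ erasures are always correctable — so the ``hard part'' is only to state the index-range bookkeeping cleanly enough that the reader sees the first term of \eqref{BL} is a genuine constant and plays no role in the optimization.
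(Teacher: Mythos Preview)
Your proposal is correct and follows essentially the same approach as the paper: the paper derives the proposition directly from \Tref{BLC} and Remark~3 together with the fact that the generator matrix of an $(n,k)$ linear code with minimum distance $d_{\rm min}$ remains full row rank after removing up to $d_{\rm min}-1$ columns, which is exactly the decomposition you give. Your additional paragraph on the code-dependence of $d_{\rm min}$ is a helpful clarification that the paper leaves implicit, but it does not constitute a different method.
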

Although MDS codes meet the aforementioned optimality criterion over large field sizes, to the best of our knowledge, the optimal linear codes per \Pref{prop6}, given the field size $q$ and in particular for $q=2$, are not known and have not been studied before, which calls for future studies.

%In the following theorem we prove that binary random linear codes \textit{asymptotically} achieve the optimal execution time, thereby demonstrating the existence of \textit{good} binary codes for distributed computation over real-valued data.
In the following theorem we characterize the gap between the execution time of binary random linear codes and the optimal execution time. Then \Cref{col8} proves that binary random linear codes \textit{asymptotically} achieve the normalized optimal execution time, thereby demonstrating the existence of \textit{good} binary linear codes for distributed computation over real-valued data.
The reason we compare the normalized $nT_{\rm avg}$'s instead of $T_{\rm avg}$'s is that, using \eqref{BL}, $T_{\rm avg}$ has a factor of $1/k$ and hence, $\lim_{n\to\infty}T_{\rm avg}=0$ for a fixed rate\footnote{More precisely, the coding rate over field size $q$ is equal to $k\log_2 q/n$ but with slight abuse of terminology we have dropped the factor of $\log_2 q$ since this factor is not relevant for coded distributed computing.} $R\triangleq k/n >0$.

 \begin{theorem}[Gap of Binary Random Linear Codes to the Optimal Performance]\label{thm7}
Let $T_{\rm avg}^{\rm BRC}$ denote the average execution time of a coded distributed computing system using binary random linear codes. Then, for any given $k$, $n$, we have
\begin{align}
&|nT_{\rm avg}^{\rm MDS}-nT^{\rm BRC}_{\rm avg}|\leq\\&\frac{1}{\mu R}\times\left[\frac{v(n)}{n-k-v(n)+1}+\frac{nR\left(1+\ln\left(n-k-v(n)\right)\right)}{2^{v(n)}}\right],\label{asymp-gap}
\end{align}
where $R$ is the rate and $v(n)$ is an arbitrary function of $n$ with $0 \leq v(n)\leq n-k$.
%\footnote{More precisely, the coding rate over field size $q$ is equal to $k\log_2 q/n$. The capacity of a $q$-ary erasure channel is also $(1-\epsilon)\log_2 q$. Since we are dealing with the comparisons among the rate and capacity, we can simply drop the factor of $\log_2 q$; in this case, the rate and capacity are defined in the unit symbols, instead of bits,  per transmission.}
%where \eqref{asymp-gap} holds for asymptotically large $n$. 
\end{theorem}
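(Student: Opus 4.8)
The plan is to bound the difference $|nT_{\rm avg}^{\rm MDS} - nT_{\rm avg}^{\rm BRC}|$ by using the explicit formula \eqref{BL} from \Tref{BLC} for both quantities and then estimating the residual sum. Since MDS codes achieve $p_{n,k}(i) = 0$ for $1 \le i \le n-k$ (see \Cref{col5}), subtracting the two expressions in \eqref{BL} kills the first (code-independent) term entirely, leaving
\begin{align}
|nT_{\rm avg}^{\rm MDS} - nT_{\rm avg}^{\rm BRC}| = \frac{n}{\mu k}\sum_{i=1}^{n-k}\frac{1}{i}\, p^{\rm BRC}_{n,k}(i) = \frac{1}{\mu R}\sum_{i=1}^{n-k}\frac{1}{i}\, p^{\rm BRC}_{n,k}(i),
\end{align}
where I have used $n/k = 1/R$ and $p^{\rm BRC}_{n,k}(i) \le p^{\mathcal{R}}_{n,k}(i)$ is the average conditional failure probability for the binary random ensemble, controlled by \Lref{lemma2}.

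Next I would split the sum $\sum_{i=1}^{n-k}$ at the point $i = n-k-v(n)$, writing it as a "small-$i$" part (indices $1 \le i \le n-k-v(n)$) plus a "large-$i$" part (indices $n-k-v(n) < i \le n-k$). For the small-$i$ part, I use the bound from \Lref{lemma2}: since $i \le n-k-v(n)$, the exponent $j-1-n+i \le j-1-k-v(n) \le -v(n)$ for $j$ ranging up to $k$ — wait, more carefully, for $j \le k$ we have $2^{j-1-n+i} \le 2^{k-1-n+i} = 2^{i-(n-k)-1} \le 2^{-v(n)-1}$. Hence, using $1 - \prod_{j=1}^k(1-x_j) \le \sum_j x_j$ for $x_j \in [0,1]$, I get $p^{\mathcal{R}}_{n,k}(i) \le \sum_{j=1}^k 2^{j-1-n+i} < 2^{k-n+i} \le 2^{-v(n)}$ (a geometric sum bounded by twice its largest term, combined with $i \le n-k-v(n)$, actually gives $\le 2^{i-(n-k)} \le 2^{-v(n)}$). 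Then $\sum_{i=1}^{n-k-v(n)} \frac1i p^{\mathcal{R}}_{n,k}(i) \le 2^{-v(n)} \sum_{i=1}^{n-k-v(n)} \frac1i \le 2^{-v(n)}\big(1 + \ln(n-k-v(n))\big)$, which, after multiplying by $\frac{1}{\mu R}$, yields the second bracketed term $\frac{nR(1+\ln(n-k-v(n)))}{2^{v(n)}}$ — note $\frac{1}{\mu R} \cdot 2^{-v(n)}(1+\ln(\cdot))$ needs a factor $nR$, so I should be careful that the prefactor is $\frac{1}{\mu R}$ times $nR \cdot 2^{-v(n)}(\cdots)$; tracing the $n/k$ versus $1/R$ bookkeeping here is exactly the kind of routine constant-chasing to get right. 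For the large-$i$ part, I crudely bound $p^{\mathcal{R}}_{n,k}(i) \le 1$ and $\frac1i \le \frac{1}{n-k-v(n)+1}$ for each of the $v(n)$ terms, giving $\sum \frac1i p^{\mathcal{R}}_{n,k}(i) \le \frac{v(n)}{n-k-v(n)+1}$, which after the $\frac{1}{\mu R}$ prefactor is the first bracketed term.

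The main obstacle I anticipate is not conceptual but bookkeeping: getting the geometric-sum estimate on $p^{\mathcal{R}}_{n,k}(i)$ tight enough that the factor is exactly $2^{-v(n)}$ (rather than, say, $2^{1-v(n)}$), and correctly reconciling the $\frac{1}{\mu k}$ in \eqref{BL} with the claimed $\frac{1}{\mu R}$ prefactor and the stray $nR$ inside the bracket — this forces a specific normalization of where the factor of $n$ lands. A secondary subtlety is justifying $p^{\rm BRC}_{n,k}(i) \le p^{\mathcal{R}}_{n,k}(i)$: the binary random code uses MAP decoding, which fails iff the relevant $k \times (n-i)$ submatrix is rank-deficient, so this is exactly \Cref{RC}. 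One should also confirm the edge cases $v(n) = 0$ (the large-$i$ part is empty and the bound reduces to the pure $2^0 = 1$ term) and $v(n) = n-k$ (the small-$i$ part is empty), for which the stated inequality should still hold with the conventions $\ln(0)$ handled by the empty sum being $0$.
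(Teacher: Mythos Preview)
Your proposal is correct and follows essentially the same route as the paper: express the gap as $\frac{1}{\mu R}\sum_{i=1}^{n-k}\frac{1}{i}\,p^{\mathcal{R}}_{n,k}(i)$ via \Tref{BLC} and \Cref{col5}, split the sum at $i=n-k-v(n)$, bound the large-$i$ tail trivially by $p^{\mathcal{R}}_{n,k}(i)\le 1$ and $\tfrac{1}{i}\le\tfrac{1}{n-k-v(n)+1}$, and bound the small-$i$ head by a uniform estimate on $p^{\mathcal{R}}_{n,k}(i)$ times the harmonic sum $\sum_{i\le n-k-v(n)}\tfrac{1}{i}\le 1+\ln(n-k-v(n))$.

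The only substantive difference is how you estimate $p^{\mathcal{R}}_{n,k}(i)$ for $i\le n-k-v(n)$. You use the union-bound inequality $1-\prod_j(1-x_j)\le\sum_j x_j$ on \eqref{RCpe} to get $p^{\mathcal{R}}_{n,k}(i)\le \sum_{j=1}^{k}2^{j-1-n+i}<2^{i-(n-k)}\le 2^{-v(n)}$. The paper instead bounds the product below by $(1-2^{-v(n)})^k$ and then applies Bernoulli's inequality $(1-x)^k\ge 1-kx$, which introduces the extra factor $k=nR$. Your estimate is therefore sharper by exactly that factor, so the ``missing'' $nR$ you flagged as bookkeeping to chase down is not missing at all: your argument yields the second bracketed term without the $nR$, and the stated bound follows \emph{a fortiori} since $nR=k\ge 1$. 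There is no hidden normalization error to fix.
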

\begin{proof}
Using \Cref{RC} and \Cref{col5}, we have
\begin{align}\label{S}
    \mathcal{S}\triangleq\mu R|nT_{\rm avg}^{\rm MDS}-nT^{\rm BRC}_{\rm avg}|=\sum_{i=1}^{n-k}\frac{1}{i}p^{\mathcal{R}}_{n,k}(i).
\end{align}
% The lower bound in \eqref{asymp-gap} is by noting that $\mathcal{S}>p_f(n-k,k)/(n-k)$, where $p_f(n-k,k)$ can be expressed as
% \begin{align}\label{lb1}
% \!p_f(n\!-\!k,k)\!\leq\!1\!-\!\frac{1\!-\!2^{-k}}{1\!-\!2^{-n}}\cdot\frac{1\!-\!2^{-k+1}}{1\!-\!2^{-n+1}}\!\cdot\!...\!\cdot\frac{1-2^{-1}}{1\!-\!2^{k-n-1}}.\!
% \end{align}
% Note that $p_f(n-k,k)=0$ for $n=k$. For $n>k$, since $1-2^{-k+j}>1-2^{-n+j}$ for $j=0,1,...,k-2$, we %can lower-bound $p_f(n-k,k)$ as
% have
% \begin{align}
% p_f(n-k,k)\!>\!1\!-\!\frac{1-2^{-1}}{1\!-\!2^{k-n-1}}\!>\!1\!-\!\frac{1-2^{-1}}{1\!-\!2^{-1-1}}\!=\!\frac{1}{3}.
% \end{align}
% Therefore, $\mathcal{S}>\frac{1}{3(1-R)n}$.

To prove the upper bound, the summation in \eqref{S} is split as $\mathcal{S}=\mathcal{S}_1+\mathcal{S}_2$ where
\begin{align}
    \mathcal{S}_1\triangleq\sum_{i=n-k-v(n)+1}^{n-k}\frac{1}{i}p^{\mathcal{R}}_{n,k}(i)&\leq\frac{v(n)}{n-k-v(n)+1},\label{S1,1}
\end{align}
%and $\mathcal{S}_2$ is defined as
\begin{align}\label{S2}
\mathcal{S}_2\triangleq\sum_{i=1}^{n-k-v(n)}\frac{1}{i}p^{\mathcal{R}}_{n,k}(i).
\end{align}
To upper-bound $\mathcal{S}_2$, we first note that the upper bound on $p^{\mathcal{R}}_{n,k}(i)$, stated in \eqref{RCpe}, is a monotonically increasing function of $i$. Then,
\begin{align}
\mathcal{S}_2&\leq p_f(n-k-v(n),k)\sum_{i=1}^{n-k-v(n)}\frac{1}{i}\label{S2,1}\\
&\leq p_f(n-k-v(n),k)\left(1+\ln\left(n-k-v(n)\right)\right)\label{S2,2},
\end{align}
where \eqref{S2,2} is by the upperbound on the harmonic sum $\sum_{i=1}^{n}\frac{1}{i}\leq 1+ \ln (n)$. \color{black}
We can further upper-bound $p_f(n-k-v(n),k)$ as
\begin{align}
p_f(n-k-v(n),k)&\leq 1-{\prod_{j=1}^{k}(1-2^{j-1-k-v(n)})}\label{pfinf1}\\
&\leq 1-\left[1-2^{-v(n)}\right]^k\label{pfinf2}\\
&\leq nR2^{-v(n)}\label{pfinf3},
\end{align}
where \eqref{pfinf1} is by \eqref{RCpe}, \eqref{pfinf2} follows by noting that
\begin{align}
\prod_{j=1}^{k}(1-2^{j-1-k-v(n)})\hspace{-1mm}=\hspace{-1mm}{\prod_{j'=1}^{k}(1-2^{-j'-v(n)})}\geq[1-2^{-v(n)}]^k,
\end{align}
and \eqref{pfinf3} follows by Bernoulli's inequality $(1-x)^k\geq1-kx$ for any $0<x<1$ and then inserting $k=nR$.
%Finally, using \eqref{S2,2} and \eqref{pfinf3}, one can observe that the choice of $v(n)=(1+\delta_c)\log_2(n)$ for any $\delta_c>0$ results in $\lim_{n\to\infty}\mathcal{S}_2<\lim_{n\to\infty}\frac{nR\ln(n(1-R))}{n^{1+\delta_c}}$. Therefore, with this choice of $v(n)$ we have shown that $\lim_{n\to\infty}\mathcal{S}_2=0$ and hence, $\lim_{n\to\infty}\mathcal{S}=0$. In fact any $v(n)$ that grows sub-linearly with $n$ proves that $\lim_{n\to\infty}\mathcal{S}=0$. However, given the speed of convergence of $\mathcal{S}_2$ and that of $\mathcal{S}_1$ in \eqref{S1inf}, one can observe that increasing the the growth speed of $v(n)$ decreases  the convergence speed of $\mathcal{S}_1$ to zero (increases the upper bound in \eqref{S1inf}) and increases the convergence speed of $\mathcal{S}_2$. The upper bounds on the limits of $\mathcal{S}_1$ and $\mathcal{S}_2$ coincide by the choice of $v(n)$ of the form of $c\log_2(n)$ for some constant $c$; this completes the proof of \eqref{asymp-gap}.
\end{proof}

\begin{corollary}[Asymptotic Optimality of Binary Random Linear Codes]\label{col8}
The normalized average execution time $nT_{\rm avg}^{\rm BRC}$ approaches $nT_{\rm avg}^{\rm MDS}$ as $n$ grows large. More precisely, for a given rate $R$, there exists a constant $c>0$ such that for sufficiently large $n$, i.e., $k=nR$, we have
\begin{align}\label{asymp-gap2}
 nT^{\rm BRC}_{\rm avg}-nT_{\rm avg}^{\rm MDS}\leq c\frac{\log_2 n}{n}.
\end{align}
\end{corollary}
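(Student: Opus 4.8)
The plan is to specialize the gap bound of \Tref{thm7} by choosing the free function $v(n)$ so that the two terms inside the bracket on the right-hand side of \eqref{asymp-gap} are \emph{simultaneously} of order $\log_2 n/n$. The two summands pull in opposite directions: the first, $v(n)/(n-k-v(n)+1)$, grows with $v(n)$, while the second, $nR(1+\ln(n-k-v(n)))/2^{v(n)}$, decays exponentially in $v(n)$; a logarithmic choice is therefore essentially forced. Concretely, I would fix $R\in(0,1)$, put $k=\lfloor nR\rfloor$, and choose $v(n)=\lceil 2\log_2 n\rceil$. Since $n-k=n(1-R)\to\infty$, for all sufficiently large $n$ this $v(n)$ is a nonnegative integer with $v(n)\le n-k$, so \Tref{thm7} applies.

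With this choice I would estimate the two terms in turn. For the first term, $v(n)\le 3\log_2 n=o(n)$, so for $n$ large the denominator satisfies $n-k-v(n)+1\ge \tfrac12(n-k)=\tfrac12 n(1-R)$, whence the first term is at most $\tfrac{6}{1-R}\cdot\tfrac{\log_2 n}{n}$. For the second term, $2^{v(n)}\ge 2^{2\log_2 n}=n^2$ and $n-k-v(n)\le n$, so the term is at most $\tfrac{nR(1+\ln n)}{n^2}=\tfrac{R(1+\ln n)}{n}$; using $1+\ln n\le(1+\ln 2)\log_2 n$ for $n\ge 2$, this is at most $R(1+\ln 2)\tfrac{\log_2 n}{n}$. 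Adding the two estimates and multiplying by $1/(\mu R)$ as in \eqref{asymp-gap} gives $|nT^{\rm MDS}_{\rm avg}-nT^{\rm BRC}_{\rm avg}|\le c\,\tfrac{\log_2 n}{n}$ with, for instance, $c=\tfrac{1}{\mu R}\bigl(\tfrac{6}{1-R}+R(1+\ln 2)\bigr)$, a constant depending only on $R$ and $\mu$.

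Finally I would remove the absolute value. By \Cref{col5} — equivalently, because the second term in \eqref{BL} is nonnegative and vanishes for MDS codes — the quantity $T^{\rm MDS}_{\rm avg}$ of \eqref{MDS1} lower-bounds the average execution time of every $(n,k)$ code, hence $nT^{\rm BRC}_{\rm avg}-nT^{\rm MDS}_{\rm avg}\ge 0$; combining this with the displayed bound yields \eqref{asymp-gap2}. I do not anticipate a real obstacle: the only genuine content is picking the right scaling for $v(n)$ (too small and the exponential term explodes, too large and the first term dominates, so $2\log_2 n$ is the essentially unique choice delivering the $\log_2 n/n$ rate), and the remaining work is bookkeeping of the ``for sufficiently large $n$'' qualifiers (positivity of the denominator, $v(n)\le n-k$, and $1+\ln n\le(1+\ln 2)\log_2 n$) needed to justify the elementary inequalities above.
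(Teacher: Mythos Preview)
Your proposal is correct and follows essentially the same approach as the paper: both proofs apply \Tref{thm7} with the choice $v(n)=2\log_2 n$ (you take the ceiling, a harmless refinement) and observe that each of the two bracketed terms in \eqref{asymp-gap} then becomes $O(\log_2 n/n)$. Your write-up is more explicit about the constants and about removing the absolute value via \Cref{col5}, but the underlying argument is identical.
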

\begin{proof}
%The lower bound holds with $c_1=1/{3\mu R(1-R)}$ according to the left hand side of \eq{asymp-gap}. 
Observe that with the choice of $v(n)=2\log_2 n$ both terms in the right hand side of \eq{asymp-gap} become $O(\frac{\log_2 n}{n})$. Note that $n-k=n(1-R) \geq 2\log_2 n$, for sufficiently large $n$. Hence, the upper bound of \eq{asymp-gap2} also holds with a proper choice of $c$.
% in the right hand  Moreover, given that $v(n)$ sub-linearly grows with $n$, we have
%\begin{align}
%\lim_{n\to\infty}\mathcal{S}_1&<\frac{v(n)}{n-k}=\frac{1}{1-R}\frac{v(n)}{n}\label{S1inf-new},\\
%\lim_{n\to\infty}\mathcal{S}_2&<nR2^{-v(n)}.\ln(n-k)\label{S2inf-new}
%\end{align}
%Using \eqref{S2inf-new} one can observe that the choice of $v(n)=(2+\delta_c)\log_2(n)$ for any $\delta_c>0$ results in $\lim_{n\to\infty}\mathcal{S}_2<\lim_{n\to\infty}\frac{nR\ln(n(1-R))}{n^{1+\delta_c}}$. Therefore, with this choice of $v(n)$ we have shown that $\lim_{n\to\infty}\mathcal{S}_2=0$ and hence, $\lim_{n\to\infty}\mathcal{S}=0$. In fact any $v(n)$ that grows sub-linearly with $n$ proves that $\lim_{n\to\infty}\mathcal{S}=0$. 
%Using \eqref{S1inf-new} and \eqref{S2inf-new} one can observe that increasing the growth speed of $v(n)$ decreases the convergence speed of $\mathcal{S}_1$ to zero (increases the upper bound in \eqref{S1inf-new}) and increases the convergence speed of $\mathcal{S}_2$. The upper bounds on the limits of $\mathcal{S}_1$ and $\mathcal{S}_2$ coincide by the choice of $v(n)=(2+\delta_c)\log_2(n)$ for $\delta_c\to0$; this completes the proof of the upper bound in \eqref{asymp-gap2}. Therefore, using \eqref{asymp-gap2}, $\lim_{n\to\infty}nT_{\rm avg}^{\rm BRC}=\lim_{n\to\infty}nT_{\rm avg}^{\rm MDS}$.
\end{proof}

%\subsection{Asymptotic Analysis}
%We conclude this section with the following corollary of Lemma 1 characterizing the asymptotically-optimal encoding rate of any homogeneous distributed computing system under shifted-exponential model. 
%In this subsection, we discuss the asymptotic behavior of linear codes that are relevant to distributed computation with linear operations.
%The following theorem characterizes the asymptotic optimality criterion, i.e., when the blocklength of the code grows large.
%Since the matrix multiplication is a linear operation, any coding scheme utilized to speed up the matrix multiplication must be linear. Hence, we restrict our analysis to the linear codes. The following theorem characterizes the asymptotic optimality criterion, i.e., when the blocklength of the code is asymptotically large.

 %By fixing the underlying coding scheme to be a capacity-achieving code, $T_{\rm avg}$ can be optimized over all rates in order to achieve the best performance. The following corollary of \Lref{lemma1} determines the asymptotically-optimal encoding rate for the underlying capacity-achieving code.
\noindent\textbf{Remark 4.}
 \label{optimumrate}
% Authors in \cite{lee2018speeding} have shown that the asymptotically-optimal rate  $R^*$ for MDS-coded distributed computing system  is  the solution to 
 For any given $n$, one can obtain the optimal value of $k$ and, subsequently, the optimal value of the encoding rate $R$ that minimizes $T_{\rm avg}^{\rm MDS}$ in \eqref{MDS1}. The limit of the optimal value of $R$ when $n \rightarrow \infty$ is referred to as the asymptotically-optimal encoding rate and is denoted by $R^*$.  \color{black}
Using \eqref{MDS1} and a similar approach to \cite{lee2018speeding}, one can show that the asymptotically-optimal encoding rate  $R^*$ for an MDS-coded distributed computing system  is  the solution to 
 \begin{align}\label{RR}
(1-R^*)\ln(1-R^*)=\mu(1-R^*)-R^*.
\end{align}
 \Cref{col8} implies that for distributed computation using binary random  linear codes, the gap between $nT^{\rm BRC}_{\rm avg}$ and $nT^{\rm MDS}_{\rm avg}$ converges to zero as $n$ grows large. Accordingly, the optimal encoding rate also approaches $R^*$, described in \eqref{RR}.

 \begin{corollary}\label{execution_time_prob}
 Let $T^{\rm MDS}$ and $T^{\rm BRC}$ denote the execution time of the coded computing schemes using an MDS code and a code whose generator matrix is picked from $\cR(n,k)$ at random, respectively. Then, there exists a constant $c$ such that 
 \be{prob_bound_time}
 \Pr[n(T^{\rm BRC}-T^{\rm MDS})\geq x]\leq\frac{c\log_2n}{nx}.
 \ee
  \end{corollary}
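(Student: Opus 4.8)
The plan is to derive \eqref{prob_bound_time} from \Cref{col8} by a single application of Markov's inequality, after first arranging the two schemes on a common probability space on which $T^{\rm BRC}-T^{\rm MDS}$ is non-negative — this sign is what makes Markov usable here, and establishing it is the only step with any real content.

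Step one is the coupling. I would run the MDS-coded scheme and the $\cR(n,k)$-coded scheme on the \emph{same} realization of the server run times $T_1,\dots,T_n$, drawing in addition an independent generator matrix $\bG$ from the ensemble $\cR(n,k)$ for the binary scheme. At any time $\tau$ the set $\cU(\tau)=\{\,i:T_i\le\tau\,\}$ of servers that have already delivered their task is common to both schemes. For the MDS code the job is recoverable at time $\tau$ iff $|\cU(\tau)|\ge k$; for the binary code under MAP decoding it is recoverable iff the $k\times|\cU(\tau)|$ submatrix $\bG_{\cU(\tau)}$ has full row rank $k$, which in particular forces $|\cU(\tau)|\ge k$. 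Hence $\{\tau:\text{binary decodable}\}\subseteq\{\tau:\text{MDS decodable}\}$, so the first decodable time of the binary scheme is at least that of the MDS scheme, i.e. $T^{\rm BRC}\ge T^{\rm MDS}$ pointwise on this coupling. In particular $X\triangleq n\,(T^{\rm BRC}-T^{\rm MDS})\ge 0$ almost surely.

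Step two is to control $\E[X]$ and conclude. Taking expectation over the run times and over $\bG$, linearity gives $\E[X]=nT^{\rm BRC}_{\rm avg}-nT^{\rm MDS}_{\rm avg}$, where $T^{\rm BRC}_{\rm avg}$ and $T^{\rm MDS}_{\rm avg}$ are precisely the averages analyzed in \Cref{RC} and \Cref{col5}; by \Cref{col8} this is at most $c\,(\log_2 n)/n$ for a suitable constant $c>0$ and all sufficiently large $n$ with $k=nR$. Applying Markov's inequality to the non-negative random variable $X$ then yields, for every $x>0$,
\[
\Pr[\,n(T^{\rm BRC}-T^{\rm MDS})\ge x\,]=\Pr[X\ge x]\le\frac{\E[X]}{x}\le\frac{c\log_2 n}{nx},
\]
which is exactly \eqref{prob_bound_time}.

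The one place that needs care — and the step I would treat as the main (indeed only) obstacle — is the coupling in step one: the statement must be read with both execution times driven by the same straggling realization, and one must verify that MAP-decodability for the binary code (full row rank of the unerased columns) genuinely implies MDS-decodability ($k$ unerased symbols), so that $X$ has a fixed sign. Granting that, the rest is just Markov together with the $(\log_2 n)/n$ decay of $\E[X]$ already supplied by \Cref{col8}, and no further estimate is required.
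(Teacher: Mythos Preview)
Your proposal is correct and follows the same route as the paper's proof, which simply says the result ``follows immediately by using the result of \Cref{col8} and Markov's inequality.'' The one thing you add is the explicit coupling argument guaranteeing $T^{\rm BRC}\ge T^{\rm MDS}$ pointwise so that Markov applies to a non-negative variable; the paper leaves this implicit, but your justification (full-row-rank of $k$ columns forces at least $k$ unerased symbols, hence MDS-decodability) is exactly the right way to close that gap.
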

\begin{proof}
The proof follows immediately by using the result of \Cref{col8} and Markov's inequality. 
\end{proof}

 % \Tref{thm7} implies that the optimal value of $T_{\rm avg}$ can be attained by utilizing binary linear codes with the same optimal rate characterized by \eqref{RR}. Hence, one can avoid  the problems that  MDS-coded computing systems face by utilizing a binary linear code as the underlying coding scheme without loss in the performance of such systems.   

%\section{Polar-Coded Distributed Computing}
%In order to consider a binary linear code that can properly work over real-valued data, with a practically low encoding and decoding complexity of order $O(n\log n)$, in this section we investigate polar codes in the context of coded distributed computation. In particular, given that the characterization of $p_{n,k}(i)$ defined in \eqref{BL} is cumbersome, we explain the encoding and decoding procedure of real-valued data over polar codes and delineate how we can obtain the average execution times using Lemma 1. 
%We further characterize their performance, in terms of the optimality gap delineated per Proposition 5, through extensive numerical results in the next section.
%\subsection{Encoding Procedure}
%\subsection{Decoding Procedure}
%\subsection{Performance Characterization}

\section{Numerical stability of random binary linear codes \color{black} } \label{sec:stability}
In this section, we study the numerical stability of the coded computing schemes utilizing binary linear codes. Our results indicate  that there exist  binary linear codes that are \emph{ numerically stable} against the inevitable rounding errors in practice which also have asymptotically optimal average execution time.

In general, in a system of linear equations ${\mathbf{A}}{\mathbf{x}}={\mathbf{b}}$, where ${\mathbf{x}}$ is a vector of unknown variables and ${\mathbf{A}}$ is referred to as the \emph{coefficient matrix}, the perturbation in the solution caused by the perturbation in ${\mathbf{b}}$ is characterized as follows. Let $\hat{{\mathbf{b}}}$ denote a noisy version of ${\mathbf{b}}$, where the noise can be caused by round-off errors, truncation, etc. Let also $\hat{{\mathbf{x}}}$ denote the solution to the considered linear system when ${\mathbf{b}}$ is replaced by $\hat{{\mathbf{b}}}$. Let $\Delta {\mathbf{x}}\triangleq \hat{{\mathbf{x}}}-{\mathbf{x}}$ and $\Delta {\mathbf{b}}\triangleq \hat{{\mathbf{b}}}-{\mathbf{b}}$ denote the perturbation in ${\mathbf{x}}$ and ${\mathbf{b}}$, respectively. Then the relative perturbations of ${\mathbf{x}}$ is bounded in terms of that of ${\mathbf{b}}$ as follows \cite{demmel1997applied}:
\be{relative_error}
\frac{\norm{\Delta {\mathbf{x}}}}{\norm{{\mathbf{x}}}}\leq \kappa_{{\mathbf{A}}}\frac{\norm{\Delta {\mathbf{b}}}}{\norm{ {\mathbf{b}}}},
\ee
where $\kappa_{{\mathbf{A}}}$ is the condition number of ${\mathbf{A}}$ and $\norm{\cdot}$ denotes the $l^2$-norm. 

The perturbation bound stated in \eqref{relative_error} implies that the precision loss in the final outcome is $\log_{10} \kappa_{{\mathbf{A}}}$ in decimal digits, where the matrix ${{\mathbf{A}}}$ is the submatirx of the generator whose rows  correspond to non-straggling worker nodes. The precision loss in decoding procedure of the codes constructed over real and complex numbers has been studied in the literature and some codes with deterministic constructions are provided \cite{boley1992algorithmic,ferreira2000stability,ferreira2003stable,henkel1988multiple,marvasti1999efficient}. A code with random Gaussian generator matrix is a numerically stable code with high probability. This is mainly due to the fact that any submatrix of a Gaussian random matrix with i.i.d. entries is also a random Gaussian matrix and such matrices are ill-conditioned only with small probability. The random Gaussian codes are often considered as a benchmark to evaluate the numerical stability of codes over real numbers with explicit constructions \cite{chen2005numerically,chen2005condition,chen2009optimal}. The result of this section implies that the random binary linear codes also offer the same numerical stability as random Gaussian codes. The motivation for using binary linear codes instead of the random Gaussian codes or the existing codes with explicit construction over real numbers is that they can offer a better decoding complexity. This will be clarified further later in Section \ref{sec:simulations} when we compare the performance of practical codes with random codes.          

Let $\cG(n,k)$ denote a random ensemble of  Gaussian codes. This random ensemble is obtained by picking entries of the $k \times n$ generator matrix independently and at random from the standard normal distribution. Let ${\mathbf{G}}$ denote a matrix picked randomly from $\cG(n,k)$ and $\tilde{{\mathbf{G}}}$ denote a random $k \times k$ submatrix of ${\mathbf{G}}$. Note that $\tilde{{\mathbf{G}}}$ is also a random Gaussian matrix. The probability bounds on the condition number of a random Gaussian matrix are provided in \cite{chen2005condition}. In particular, 
\be{Gaussian_condition}
\Pr[\kappa_{\tilde{{\mathbf{G}}}}>nx] < \frac{1}{\sqrt{2\pi}} \frac{C}{x},
\ee
where $5.013<C<6.414$. Consequently, the precision loss in recovery of the computation outcome for the coded computing system using random Gaussian codes is $O(\log_{10} k)$ with high probability. 

The behavior of the largest and smallest singular values of random matrices with i.i.d. entries has been extensively studied in the literature. In particular, we use such results for random matrices with  \emph{sub-Gaussian} random variables. Recall that a random variable $X$ is called sub-Gaussian if its tail is dominated by that of the standard normal random variable, i.e., if there exists $B>0$ such that 

\be{subgaussian_deff}
\Pr[|X|>t] \leq 2 \exp(\frac{-t^2}{B^2}).
\ee
The minimal $B$ is called the \emph{Gaussian moment} of $X$ \cite{rudelson2008littlewood}. Note that a signed Bernoulli random variable $X$ with
$$
\Pr[X=1]=\Pr[X=-1]=\frac{1}{2}
$$
is  sub-Gaussian. Let $\lambda_{\max}$ and $\lambda_{\min}$ respectively denote the largest and the smallest singular value of a random $k\times k$ matrix whose entries are independent zero-mean  sub-Gaussian random variables. Then, 
\be{lambda_max_subgaussian}
\Pr[\lambda_{\max}>Ck^{\frac{1}{2}}+t]\leq 2 \exp{(-ct^2)},
\ee
where $c$ and $C$ are absolute constants \cite{rudelson2010non}.
Moreover, if the variance of the underlying sub-Gaussian random variable is at least $1$, we have
\be{lambda_min_subgaussian}
\Pr[\lambda_{\min}\leq \epsilon k^{-\frac{1}{2}}] \leq C'\epsilon+c'^k
\ee
for all $\epsilon>0$, where $C'$ and $c'$ are constants depending polynomially on the sub-Gaussian moment \cite{rudelson2008littlewood}. It is worth mentioning that the best known $c'$ for the case of Bernoulli random variable is $\frac{1}{\sqrt{2}}+ o(1)$ \cite{bourgain2010singularity}.
 The bounds provided in \eqref{lambda_max_subgaussian} and \eqref{lambda_min_subgaussian} imply that the condition number of a $k \times k$ random Bernoulli matrix is also $O(n)$ with high probability. Hence, the precision loss in recovery of the outcome of a coded computing scheme utilizing random Gaussian codes and random binary linear codes are \emph{almost} the same. This together with the result of \Cref{execution_time_prob} in Section\,\ref{sec:ave_time} imply that there exist binary linear codes that are numerically stable with asymptotically optimal recovery time. This result is stated in the following theorem.
 
 \begin{theorem}\label{good_binary_codes_exist}
Let ${\mathbf{G}}$ denote a $k \times n$  matrix  picked from $\cR(n,k)$ at random. Let also $T^{\rm MDS}$ and $T^{\rm BRC}$ denote the execution time of the coded computing schemes using an MDS code and a code whose generator matrix is ${\mathbf{G}}$, respectively.  Then, the coded computing scheme utilizing the binary linear code generated by ${\mathbf{G}}$ recovers the computation outcome with  $O(\log_{10}\frac{ k}{\epsilon})$ precision loss in decimal digits in $T^{\rm BRC}\leq T^{\rm MDS}+O(\frac{1}{n})$ time with probability $1- O(\epsilon+\frac{\log_2 n}{n})$.  
 \end{theorem}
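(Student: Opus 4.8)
The plan is to assemble the statement from two facts that live on the common probability space consisting of the random draw of $\mathbf{G}$ from $\cR(n,k)$ together with the independent straggling realization of the $n$ servers, and then to apply a union bound (so that no independence between the two facts is needed). The first fact is the execution-time guarantee already proved: by \Cref{execution_time_prob}, $\Pr[n(T^{\rm BRC}-T^{\rm MDS})\geq x]\leq c\log_2 n/(nx)$, so taking $x$ to be any fixed positive constant gives $T^{\rm BRC}\leq T^{\rm MDS}+O(1/n)$ except on an event of probability $O(\log_2 n/n)$; call this the ``fast-recovery'' event.

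The second fact is a high-probability bound on the condition number of the matrix actually inverted during decoding. Under MAP decoding over the erasure channel, at time $T^{\rm BRC}$ a decodable set of coordinates exists by definition, and recovery reduces to solving $\mathbf{A}\mathbf{x}=\mathbf{b}$ for a full-rank $k\times k$ submatrix $\mathbf{A}$ of $\mathbf{G}$ supported on unerased coordinates; by the perturbation bound \eqref{relative_error} the associated precision loss is $\log_{10}\kappa_{\mathbf{A}}$ decimal digits. Since the erasure pattern is independent of $\mathbf{G}$, conditioning on it and on $\mathbf{A}$ being full rank makes $\mathbf{A}$ a uniformly random full-rank $k\times k$ matrix with i.i.d.\ signed-Bernoulli entries; and since $\{\mathbf{A}\text{ full rank}\}$ already forces $\mathbf{G}$ to have full row rank, the probability of this conditioning event is $l(k,k)\geq\prod_{j\geq1}(1-2^{-j})>0$ by the bound \eqref{Pf} used in the proof of \Lref{lemma2}, so the conditioning inflates every failure probability by at most an absolute constant, which is absorbed into the $O(\cdot)$. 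A signed-Bernoulli variable is zero-mean, sub-Gaussian, and has variance $1$, hence \eqref{lambda_max_subgaussian} with $t=Ck^{1/2}$ gives $\lambda_{\max}(\mathbf{A})=O(k^{1/2})$ except with probability $2e^{-c''k}$, while \eqref{lambda_min_subgaussian} gives $\lambda_{\min}(\mathbf{A})>\epsilon k^{-1/2}$ except with probability $C'\epsilon+c'^{k}$; on the intersection,
\[
\kappa_{\mathbf{A}}=\frac{\lambda_{\max}(\mathbf{A})}{\lambda_{\min}(\mathbf{A})}=O\!\left(\frac{k^{1/2}}{\epsilon k^{-1/2}}\right)=O\!\left(\frac{k}{\epsilon}\right),
\]
so the precision loss is $\log_{10}\kappa_{\mathbf{A}}=O(\log_{10}(k/\epsilon))$. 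Because $k=nR\to\infty$, the terms $2e^{-c''k}$ and $c'^{k}$ are dominated by $\epsilon$ (or by $\log_2 n/n$), so this ``well-conditioned'' event fails with probability only $O(\epsilon)$. A union bound with the fast-recovery event then yields that both hold simultaneously with probability $1-O(\epsilon+\log_2 n/n)$, which is precisely the claim.

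I expect the main obstacle to be the bookkeeping around which $k$ unerased columns the decoder inverts: the support $S$ of $\mathbf{A}$ can a priori depend on $\mathbf{G}$ itself and not only on the (independent) erasure pattern, which breaks the clean ``$\mathbf{A}$ is i.i.d.\ signed-Bernoulli conditioned on full rank'' description used above. Resolving this cleanly requires either fixing a decoding rule whose inverted support is a deterministic function of the erasure pattern alone whenever a decodable set exists, or replacing the single application of \eqref{lambda_max_subgaussian}--\eqref{lambda_min_subgaussian} by a union bound over candidate supports with enough slack in $t$ to survive the extra combinatorial factor; once this is settled, the remaining steps are the routine combination above.
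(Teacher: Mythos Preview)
Your proposal is correct and follows essentially the same route as the paper's own proof: combine the execution-time bound of \Cref{execution_time_prob} (with a fixed $x$) and the sub-Gaussian singular-value estimates \eqref{lambda_max_subgaussian}--\eqref{lambda_min_subgaussian} on the inverted $k\times k$ block, then union-bound, exactly as the paper does in \eqref{union-bound}. The bookkeeping worry you flag about the decoded support possibly depending on $\mathbf G$ is legitimate, but the paper's proof simply takes $\tilde{\mathbf G}$ to be ``a random $k\times k$ submatrix of $\mathbf G$'' and applies \eqref{lambda_max_subgaussian}--\eqref{lambda_min_subgaussian} directly without addressing it (or the conditioning on full row rank inherent in $\cR(n,k)$), so your argument is, if anything, more careful than the original on both points.
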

 
 \begin{proof}
 Let $\tilde{{\mathbf{G}}}$ denote a random $k \times k$ submatix of ${\mathbf{G}}$. Combining \eqref{lambda_max_subgaussian} with \eqref{lambda_min_subgaussian} together with the union bound implies
 \be{union-bound}
 \Pr[\kappa_{\tilde{{\mathbf{G}}}}>\frac{C}{\epsilon}k+\epsilon t\sqrt{k}]\leq 2 \exp{(-ct^2)} +C'\epsilon+c'^k.
 \ee
 Combining \eqref{prob_bound_time}  and \eqref{union-bound} together with the union bound implies
 \begin{align*}
 &\Pr[(\kappa_{\tilde{{\mathbf{G}}}}>\frac{C}{\epsilon}k+\epsilon t\sqrt{k})\ \ \textit{or} \ \ (T^{\rm BRC}-T^{\rm MDS}>\frac{x}{n})]\\
 & \leq \exp{(-ct^2)} +C'\epsilon+c'^k +\frac{c\log_2n}{nx} = O(\epsilon+\frac{\log_2 n}{n})
 \end{align*}
 for all $\epsilon>0$, which completes the proof. 
 \end{proof}
 
 The result of \Tref{good_binary_codes_exist} implies that  there exist numerically stable binary linear codes with asymptotically optimal average execution time. In the rest of the paper, we consider some coding schemes over real numbers constructed based on codes over $\mathbb{F}_2$, namely, RM and polar codes, that offer lower decoding complexity than MDS codes. The numerical stability of such schemes are naturally inherited from the proposed decoding algorithms that involve additions,  subtractions {and, in the case of RM codes, inverting logarithmic-size matrices that are well-conditioned, as numerically verified in the next section}. Moreover, their average execution times are also compared numerically with the optimal values for a wide range of blocklengths (i.e., number of servers).

{
\section{Practical Coded Computing Schemes}\label{sec:practical_codes}
In this section, we explore RM- and polar-coded distributed computation. First, we briefly review RM codes and polar codes, two closely-connected classes of codes, in Section \ref{sec:review_RM}. Then, in Section \ref{sec:RM}, we present our proposed low-complexity algorithm for decoding RM codes over erasure channels that enables RM-coded distributed computing over real-valued data.
%Our decoding algorithm not only has a low complexity and close-to-MAP performance but also involves only additions and subtractions {(as well as inverting matrices of size $\mathcal{O}(\log n)$)} over real numbers and thus is numerically stable.
Finally, we present polar-coded computation in Section \ref{sec:polar}.

\subsection{Brief Review of RM and Polar Codes}\label{sec:review_RM}
Let $k$ and $n$ be the code dimension and blocklength, respectively, and let $m\triangleq\log_2 n$ be a design parameter. Then, the $r$-th order RM code of length $2^m$, denoted by $\mathcal{RM}(m,r)$, is defined by the following set of vectors as the basis
\begin{align}\label{rm_basis}
\{\mathbf{v}_m(\mathcal{A}):~\mathcal{A}\subseteq[m],|\mathcal{A}|\le r\},
\end{align}
where {$|\mathcal{A}|$ denotes the size of the set $\mathcal{A}$, and $[m]\triangleq\{1,2,\dots,m\}$. Moreover,} $\mathbf{v}_m(\mathcal{A})$ is a row vector of length $2^m$ whose components are indexed by binary vectors $\mathbf{z}=(z_1,z_2,\dots,z_m) \in \{0,1\}^m$. Each component of $\mathbf{v}_m(\mathcal{A})$ is obtained as $\mathbf{v}_m(\mathcal{A},\mathbf{z}) = \prod_{i\in \mathcal{A}} z_i$. In other words, considering a polynomial ring $\mathbb{F}_2[Z_1,Z_2,\dots,Z_m]$ of $m$ variables, the components of $\mathbf{v}_m(\mathcal{A})$ are the evaluations of the monomial $\prod_{i\in \mathcal{A}}Z_i$ at points $\mathbf{z}$ in the vector space $\mathbb{E}\triangleq\mathbb{F}_2^m$. It is easy to observe from \eqref{rm_basis} that there are $\sum_{i=0}^r \binom{m}{i}$ basis (equivalently, $\mathcal{A}$'s) in total, and thus an $\mathcal{RM}(m,r)$ code has a dimension of $k=\sum_{i=0}^r \binom{m}{i}$.

Finally, given the set of basis in \eqref{rm_basis}, the (codebook of) $\mathcal{RM}(m,r)$ code can be defined as the following set of $2^k$ binary vectors
\begin{align}\label{rm_codebook}
\mathcal{RM}(m,r) \triangleq \left\{\sum_{\mathcal{A}\subseteq[m],|\mathcal{A}|\le r}\hspace{-0.5cm}u(\mathcal{A}) \mathbf{v}_m(\mathcal{A}): u(\mathcal{A})\in\{0,1\}\right\}.
\end{align}
Therefore, each codeword $\mathbf{c}=(\mathbf{c}(\mathbf{z}), \mathbf{z}\in\mathbb{E})\in\mathcal{RM}(m,r)$, that is indexed by the binary vectors $\mathbf{z}$, is defined as the evaluations of an $m$-variate polynomial with degree at most $r$ at points $\mathbf{z}\in\mathbb{E}$. 

While RM codes have a universal construction, the construction of polar codes, on the other hand, is \textit{channel-specific}. Consider Ar{\i}kan's $n\times n$ polarization matrix $\mathbf{G}_n=\begin{bmatrix}
1 & 0\\ 
1 & 1
\end{bmatrix}^{\otimes m}$, where $m=\log_2 n$ and $\mathbf{A}^{\otimes m}$ denotes the $m$-th Kronecker power of $\mathbf{A}$. The encoding of polar codes is obtained from the aforementioned polarization matrix $\mathbf{G}_n$ in a {channel-specific} manner. Particularly, in the case of binary erasure channels (BECs), a design parameter $\epsilon_d$ is picked, as specified later in Section \ref{sec:simulations}. Then the polarization transform $\mathbf{G}_n$ is applied to a BEC with erasure probability $\epsilon_d$, BEC$(\epsilon_d)$. The erasure probabilities of the polarized bit-channels, denoted by $\{Z_i\}_{i=1}^n$, are sorted and the $k$ rows of $\mathbf{G}_n$ corresponding to the indices of the $k$ smallest $Z_i$'s are picked to construct the $k\times n$ generator matrix $\mathbf{G}$.

One can also obtain an equivalent encoding of $\mathcal{RM}(m,r)$, similar to that of polar codes, by selecting rows of the square matrix $\mathbf{G}_n$ that have a Hamming weight of at least $2^{m-r}$. In this case, the resulting generator matrix $\boldsymbol{G}_{k\times n}$ will have $\binom{m}{i}$ rows of Hamming weight $n/2^i$, for $i=0,1,\cdots, r$.

\subsection{RM-Coded Distributed Computation}\label{sec:RM}
% RM codes are closely related to polar codes, where for an $(n,k)$ RM code the generator matrix $\mathbf{G}$ is constructed by choosing the $k$ rows of $\mathbf{G}_n$ (defined in Section \ref{sec:polar_encoding}) having the largest Hamming weights. 
It has recently been shown that RM codes are capacity achieving over BECs \cite{kudekar2017reed}, though under bit-MAP decoding, and numerical results suggest that they actually achieve the capacity with \textit{almost} optimal scaling \cite{hassani2018almost}.
RM codes also achieve the capacity of BSCs at extreme rates, i.e., at rates converging to zero or one \cite{abbe2015reed}. They are also conjectured to have characteristics similar to those of random codes in terms of both scaling laws \cite{hassani2018almost} and weight enumeration \cite{kaufman2012weight}. Despite all these excellent properties, RM codes still lack efficient low-complexity decoders for general code dimensions and blocklengths. Very recently, Ye and Abbe \cite{ye2020recursive} proposed a recursive projection-aggregation (RPA) algorithm for decoding RM codes over BSCs and general binary-input memoryless channels. The RPA algorithm is comprised of three main steps: 1) projecting the received corrupted codeword onto the cosets defined by each projection subspace, 2) recursively decoding the projected codewords, and 3) aggregating the decoded codewords at the next layer with the current observation to finally decode the original RM codeword.

In this section, we propose an efficient RPA-like algorithm for decoding RM codes over erasure channels. Our decoding algorithm has three major novelties. First, it only involves additions, subtractions, {and inverting relatively small matrices of size no more than $\log n+1$.}
%and hence stably works over real numbers.
We need to emphasize that the RPA algorithms proposed in \cite{ye2020recursive} work over the binary field and do not directly apply to real-valued inputs. For example, as detailed in Section \ref{sec:RM_projection}, the projection step in the original RPA algorithms requires addition of the received bits over the binary field (i.e., XOR'ing them). Therefore, our proposed decoding algorithm generalizes the RPA algorithms to the case of erasure channels while avoiding operations over finite fields.
%by enforcing only additions and subtractions over real numbers.
Second, our decoding algorithm has a low complexity achieved by carefully selecting a small fraction of the total number of projections, i.e., only $\binom{m}{r-1}$ projection subspaces of dimension $s=r-1$ are selected to decode an $\mathcal{RM}(m,r)$ code. Therefore, our decoding algorithm enables decoding RM codes of higher orders and lengths with a manageable complexity. Third, our simulation results suggest that our decoding algorithm is able to achieve very close to the performance of optimal MAP decoding while maintaining a low complexity.

In the following, we explain our decoding algorithm that is comprised of three main steps separately described in Sections \ref{sec:RM_projection}, \ref{sec:RM_decoding}, and \ref{sec:RM_aggregation}. 
Assuming an $\mathcal{RM}(m,r)$ code, our decoding algorithm only applies one layer of projection
%(as detailed in Section \ref{sec:RM_projection})
using $s$-dimensional subspaces with $s=r-1$. It then applies MAP decoding (block-MAP in conjunction with bit-MAP) to decode the projected codewords,
%(see Section \ref{sec:RM_decoding}), 
and finally aggregates the results to recover the codeword. 
%The rationale behind the choice of $s=r-1$ is to balance a trade-off between the decoding complexity and error rate. More precisely, 
{As we show in Section \ref{sec:RM_projection}, a smaller value of $s$ increases the complexity of the decodings performed at the bottom layer but it can also result in a better decoding performance since the MAP decoding over projected vectors will be utilized to a greater extent (recall that there is only a single layer of projection). It will be clarified later that this choice of $s$ can result in a manageable decoding complexity while not much sacrificing the decoding performance.}

\subsubsection{Projection}\label{sec:RM_projection} The decoding algorithm starts by projecting the received corrupted codeword onto the cosets defined by the projection subspaces. More specifically, let $\mathbb{B}_i$ be a $s$-dimensional subspace of $\mathbb{E}$, with $s\leq r$ and $i\in[\mathcal{I}]$ where $\mathcal{I}$ is the total number of projection subspaces. The quotient space $\mathbb{E}/\mathbb{B}_i$ contains all the cosets of $\mathbb{B}_i$ in $\mathbb{E}$ where each coset $\boldsymbol{\tau}$ has the form $\boldsymbol{\tau}=\mathbf{z}+\mathbb{B}$ for some $\mathbf{z}\in\mathbb{E}$. Given a length-$n$ codeword and a $s$-dimensional subspace, the objective of the projection step is to obtain a length-$n/2^s$ vector whose each component, that corresponds to one of the $n/2^s$ cosets, is an \textit{appropriate} representative of the bits indexed by the $2^s$ elements of that coset. Building upon this notion and the definition of the log-likelihood ratio (LLR), the expressions for the projection of the channel LLRs over general binary-input memoryless channels are derived in \cite{ye2020recursive} for subspaces of dimension $s=1$ and $2$. By applying the same principles, one can show that the projection of the length-$n$ vector of the corrupted codeword over a BEC can be obtained by adding the bits indexed by the cosets over the binary field (i.e., XOR'ing) while assuming the addition of bits $0$ or $1$ with an erasure is also an erasure. In other words, considering $\mathbf{y}=(\mathbf{y}(\mathbf{z}), \mathbf{z}\in\mathbb{E})$ as the length-$n$ vector of the corrupted codeword over a BEC, the projected vector onto a subspace $\mathbb{B}$ can be obtained as $\mathbf{y}_{/\mathbb{B}} \triangleq\big(\mathbf{ y}_{/\mathbb{B}}(\boldsymbol{\tau}), \boldsymbol{\tau}\in\mathbb{E}/\mathbb{B} \big)$ such that $\mathbf{ y}_{/\mathbb{B}}(\boldsymbol{\tau}) \triangleq \bigoplus_{\mathbf{z}\in \boldsymbol{\tau}} \mathbf{ y}(\mathbf{z})$, where $\bigoplus$ denotes the coordinate-wise addition in $\mathbb{F}_2$ while assuming that additions with an erasure is also an erasure.

While the projection rule  $\mathbf{ y}_{/\mathbb{B}}(\boldsymbol{\tau})= \bigoplus_{\mathbf{z}\in \boldsymbol{\tau}} \mathbf{ y}(\mathbf{z})$  works perfectly over the binary field, we are looking for a low-complexity decoder that works over real numbers to avoid numerical issues (caused by working over finite fields). To this end, considering the received vector as $\mathbf{y}=(\mathbf{y}(\mathbf{z}), \mathbf{z}\in\mathbb{E})$, we can obtain the projected vector by linearly combining (over real numbers) the entries indexed by the cosets to obtain $\mathbf{y}_{/\mathbb{B}} \triangleq\big(\mathbf{ y}_{/\mathbb{B}}(\boldsymbol{\tau}), \boldsymbol{\tau}\in\mathbb{E}/\mathbb{B} \big)$ such that
\begin{align}\label{eq_proj}
\mathbf{ y}_{/\mathbb{B}}(\boldsymbol{\tau}) \triangleq \sum_{\{i:\mathbf{z}_i\in \boldsymbol{\tau}\}} \gamma_i\mathbf{ y}(\mathbf{z}_i),
\end{align}
where $\gamma_i$'s are some properly-chosen real-valued combining coefficients, as clarified later. 
As clarified in \cite[Remark 1]{jamali2021reed}, the result of the projection operation can be thought of as obtaining a generator matrix that is formed by merging the columns of the original code generator matrix indexed by the cosets of the projection subspace. In other words, given a $k\times n$ generator matrix $\mathbf{G}_{k\times n}$, we can define matrices of dimension $k\times n/2^s$, referred to as \textit{projected generator matrices}, each obtained by merging the columns of the original generator matrix indexed by the cosets of each projection subspace. In this paper, assuming an $\mathcal{RM}(m,r)$ code, we work based on projection subspaces of dimension $s=r-1$. In the binary field, this choice of $s$ will result in order-$1$ RM codes, i.e., $\mathcal{RM}(m-r+1,1)$ codes, at the bottom layer after the projection \cite{ye2020recursive}. These codes all have the same dimension of $m-r+2$ that is also equal to the rank of the projected generator matrices. However, for our decoder that works on real-valued inputs the aforementioned projected generator matrices can have different ranks and they no longer (necessarily) correspond to the generator matrices of the  lower order (and lower length) RM codes. 

{ Next, we discuss how to choose $\gamma_i$'s in \eqref{eq_proj}. In our proposed decoder, we pick $\gamma_i \in \{-1,+1\}$ such that combining coefficients result in the same rank of $m-r+2$ as the lower order RM code $\mathcal{RM}(m-r+1,1)$ after an $(r-1)$-dimensional projection of $\mathcal{RM}(m,r)$. To illustrate this, and for simplicity, we consider the one-dimensional projections on $\mathcal{RM}(m,r)$. Note that the argument for general $s$-dimensional projections, and, consequently, for $s=r-1$, would follow naturally by cascading $s$ one-dimensional projections. Note that there are, at least, $m$ projections of $\mathcal{RM}(m,r)$ over real numbers that lead to the $\mathcal{RM}(m-1,r-1)$ code. These projections correspond to
standard basis vectors $\boldsymbol{e}_j$'s, where $\boldsymbol{e}_j$, for $j\in[m]$, is a length-$m$ vector with a $1$ in the $j$-th position and $0$ in all other $m-1$ positions. For instance, the projection corresponding to $\boldsymbol{e}_m = (0,0,\dots,1)$ leads to pairing of consecutive column indices of the generator matrix of $\mathcal{RM}(m,r)$, i.e., $(0,1), (2,3), ..., (n-2,n-1)$. The main observation, which enables this whole process, is that for each pair $(2i,2i+1)$, the support of the column $2i+1$ is a subset of the support of the column $2i$ (when considering entries to be $0$'s and $1$'s). Therefore, if we subtract column $2i+1$ from column $2i$, in the real domain, that would correspond to XORing them in the binary domain. Hence, we get the generator matrix of $\mathcal{RM}(m-1,r-1)$ with this particular choice of $\gamma_i$'s for each of these $m$ projections. Also, a simple normalization by $2$ is done in order to keep matrix entries belonging to the set $\{-1,+1\}$. For general $s$, one can choose $\mathcal{I}=\binom{m}{s}$ $s$-dimensional subspaces $\mathbb{B}_i$ each obtained by the span of $s$ standard basis vectors $\boldsymbol{e}_j$'s and apply the aforementioned process recursively. In particular, for $s = r-1$, we can efficiently find the choices of $\gamma_i$'s in $\{-1,+1\}$ that lead to projection of $\mathcal{RM}(m,r)$ into $\mathcal{RM}(m-r+1,1)$ using ${m \choose r-1}$ projections.}

Note that, given an $\mathcal{RM}(m,r)$ code for the coded distributed computation, the set of the projection subspaces, the cosets, the combining coefficients (that result in the rank of $m-r+2$), and the corresponding projected generator matrices will be computed before hand to lower the complexity of the decoder and prevent any operation over finite fields during the decoding process. It is worth mentioning at the end that the specific selection of the combining coefficients as described above has several advantages. First, having the combining coefficients being either $-1$ or $+1$ renders a projection step that only involves additions and subtractions (and no multiplications). It also equally weights all the entries to be combined which in turn can prevent numerical stability issues that may arise from some of the weights being very large or very small. Second, the corresponding generator matrices all will have the smallest possible rank. This is a twofold gain: 1) it lowers the complexity of the MAP decoding over the projected vectors, and 2) it can also improve the error rate performance of the decodings at the bottom layer by increasing the chances of getting the same rank for the projected generator matrices after erasing some of the columns (see Section \ref{sec:RM_decoding}).

\subsubsection{Decoding of the Projected Vectors}\label{sec:RM_decoding}
Once the projection step is completed, we end up with $\mathcal{I}$ length-$2^{m-r+1}$ projected vectors $\mathbf{y}_{/\mathbb{B}_i}$'s, $i\in[\mathcal{I}]$, each corresponding to a $k\times 2^{m-r+1}$  projected generator matrix of rank $m-r+2$. Given that we are working over erasure channels, we either know an entry perfectly (without error) or we do not know it at all. The objective of this step is to decode the projected vectors by decoding either all the erased entries (block-MAP) or a fraction of them (bit-MAP) in each of these projected vectors. For each $\mathbf{y}_{/\mathbb{B}_i}$, let $\mathcal{Y}_{i}^e$ and $\mathcal{Y}_i^{ne}$ be the sets of the indices of the erased and non-erased entries, respectively. Also let $\mathbf{G}^{ne}_{p,i}$ denote the $k\times|\mathcal{Y}_i^{ne}|$ submatrix of the projected generator matrix (over the subspace $\mathbb{B}_i$) comprised of the $|\mathcal{Y}_i^{ne}|$ columns corresponding to non-erased entries in $\mathbf{y}_{/\mathbb{B}_i}$. Then if the rank of $\mathbf{G}^{ne}_{p,i}$ is equal to $m-r+2$ (i.e., if erasing $|\mathcal{Y}_i^{e}|$ columns of the projected generator matrix does not change its rank), the block-MAP decoder can be applied to decode all the erased entries. This is equivalent to say that each of the erased entries can be obtained as a linear combination of the non-erased entries.
On the other hand, if the block-MAP decoder fails, there is still a chance that the bit-MAP decoder can decode some of the entries. In particular, 
%for each of the $|\mathcal{Y}_i^{e}|$ indices, we add the corresponding column (one at a time) to $\mathbf{G}^{ne}_{p,i}$. If the rank of the new matrix is the same as the rank of $\mathbf{G}^{ne}_{p,i}$, then we can decode the erased entry at that index via a linear combination of the non-erased entries. 
{if rank remains the same after adding to $\mathbf{G}_{p,i}^{ne}$, a column
corresponding to one of the $|\mathcal{Y}_i^{e}|$ indices, then that particular erased entry can be recovered as a linear combination of non-erased entries.}
By performing the same procedure for all $\mathbf{y}_{/\mathbb{B}_i}$'s, we obtain the decoded projected vectors $\mathbf{\hat{y}}_{/\mathbb{B}_i}$'s for all $i\in[\mathcal{I}]$.

\subsubsection{Aggregation}\label{sec:RM_aggregation} The objective in this step is to combine the observation from the channel output, i.e., the corrupted vector $\mathbf{y}$, with that of the projected vectors, i.e., $\mathbf{\hat{y}}_{/\mathbb{B}_i}$'s, to obtain the decoded vector $\mathbf{\hat{y}}$. Let $\mathcal{Y}^e$ be the set of the indices of the erased entries in $\mathbf{y}$. In the following, we describe our aggregation method assuming a given projection subspace $\mathbb{B}_i$, $i\in[\mathcal{I}]$, and an erased index $l\in\mathcal{Y}^e$. By applying the same procedure for all projection subspaces and all erased indices, we can obtain the decoded vector $\mathbf{\hat{y}}$.

Let $\mathbf{z}$ be the binary vector indexing the $l$-th position which is an erasure. Also, let $[\mathbf{z}+\mathbb{B}_i]$ denote the cost of $\mathbb{B}_i$ that contains $\mathbf{z}$.
Since $\mathbb{B}_i$ is an $s$-dimensional subspace (where $s=r-1$ in this paper), there are $2^s-1$ elements, denoted by $\mathbf{z}_1,\mathbf{z}_2, \cdots,\mathbf{z}_{2^s-1}$, in $[\mathbf{z}+\mathbb{B}_i]$ apart from $\mathbf{z}$ itself. Given the decoded projected vector $\mathbf{\hat{y}}_{/\mathbb{B}_i}$, we know from the projection step that $\mathbf{\hat{y}}_{/\mathbb{B}_i}([\mathbf{z}+\mathbb{B}_i])$ is an estimation of $\gamma \mathbf{y}(\mathbf{z})+\sum_{j=1}^{2^s-1}\gamma_j \mathbf{y}(\mathbf{z}_j)$, where $\{\gamma,\gamma_1,\cdots,\gamma_{2^s-1}\}$ is the set of the predefined combining coefficients for this particular projection. Now, if $\mathbf{\hat{y}}_{/\mathbb{B}_i}([\mathbf{z}+\mathbb{B}_i])$ and all $\mathbf{y}(\mathbf{z}_j)$'s, for $j=1,\cdots 2^s-1$, are known (i.e., none of them are erased), we can recover $\mathbf{y}(\mathbf{z})$ as
\begin{align}\label{eq_aggr}
    \mathbf{\hat{y}}(\mathbf{z})=\frac{1}{\gamma}\bigg[\mathbf{\hat{y}}_{/\mathbb{B}_i}([\mathbf{z}+\mathbb{B}_i])-\sum_{j=1}^{2^s-1}\gamma_j \mathbf{y}(\mathbf{z}_j) \bigg].
\end{align}
Note that since we are working over erasure channels, if any  projection satisfies the above condition, we do not need to check the other projections for decoding the erased entry at the $l$-th index.  

{
\noindent\textbf{Example 1.}
Consider a corrupted codeword of $\mathcal{RM}(3,2)$, denoted by $\mathbf{y}=(y_1,y_2,\cdots,y_8)$, with an erasure at the fifth position. Also, we consider three projection subspaces as $\mathbb{B}_1=\{(0,0,0),(1,0,0)\}$, $\mathbb{B}_2=\{(0,0,0),(0,1,0)\}$, and $\mathbb{B}_3=\{(0,0,0),(0,0,1)\}$. Consider projection in the direction of $\mathbb{B}_1$. For this projection, the quotient space is $\mathbb{E}/\mathbb{B}_1=\{(1,5),(2,6),(3,7),(4,8)\}$. Additionally, one can observe that the choice of combining coefficients $\gamma_1=-1$ and $\gamma_2=1$ results in the minimum rank of $m-r+2=3$ for all three projected generator matrices. Accordingly, the projected received vector $\mathbf{y}/\mathbb{B}_1$ is of the form $(y_5-y_1,y_6-y_2,y_7-y_3,y_8-y_4)$ which will have an erasure only at the first position. Since the corresponding projected generator matrix remains rank-$3$ after removing the first column, the block-MAP decoder will be able to recover the single erasure in $\mathbf{y}/\mathbb{B}_1$. Similarly, one can show that the other two projected received vectors $\mathbf{y}/\mathbb{B}_2$ and $\mathbf{y}/\mathbb{B}_3$ will both have a single erasure at the third position. Given that their corresponding projected generator matrices also remains rank-$3$ after removing the third column, the decoder will be able to recover the erased bits at the third indices. Finally, given the successful decoding of the projected vectors, one can successfully decode the corrupted codeword after aggregation.
}

Figure \ref{F1_BLER} compares the block error rate (BLER) results for our projective decoder with that of optimal MAP decoder for various RM codes of interest to distributed computing (see Table \ref{table} for the rationale behind the code parameters in this figure). It is remarkable that our decoder can achieve almost the same performance as that of the MAP decoder with only $\binom{m}{r-1}$ projections for an $\mathcal{RM}(m,r)$ code\footnote{{We have observed that the gap between the performance of our low-complexity decoding algorithm and that of MAP increases as we increase $n$ (e.g., in $\mathcal{RM}(7,4)$ and $\mathcal{RM}(8,5)$ codes that have close code parameters to the RM subcodes considered in Table I). However, very large $n$'s can be less relevant in the context of distributed computing as $n$ here corresponds to the number of servers.}}. In fact, only $\mathcal{I}=3$, $4$, $10$, and $15$ projection subspaces are selected for $\mathcal{RM}(3,2)$, $\mathcal{RM}(4,2)$, $\mathcal{RM}(5,3)$, and $\mathcal{RM}(6,3)$, respectively, which are significantly less than the full number of projections for RPA-like decoding of these codes. Note that the decoder may also iterate the whole process, described in this section, a few times to ensure the convergence of the algorithm. {The convergence here means that there is no difference between what is known about the codeword, in terms of the corrected symbols, at the end of the current iteration with that of the previous iteration.} Note that the number of outer iterations, denoted by $N_{\rm max}$, is not much to cause a serious complexity issue for our algorithm. In particular, in Figure \ref{F1_BLER}, the maximum number of outer iterations is chosen to be $N_{\rm max}=1$, $2$, $2$, and $3$ for  $\mathcal{RM}(3,2)$, $\mathcal{RM}(4,2)$, $\mathcal{RM}(5,3)$, and $\mathcal{RM}(6,3)$, respectively.

\begin{figure}
	\centering
	\includegraphics[trim={0cm 0.2cm 0cm 0cm},width=3.6in]{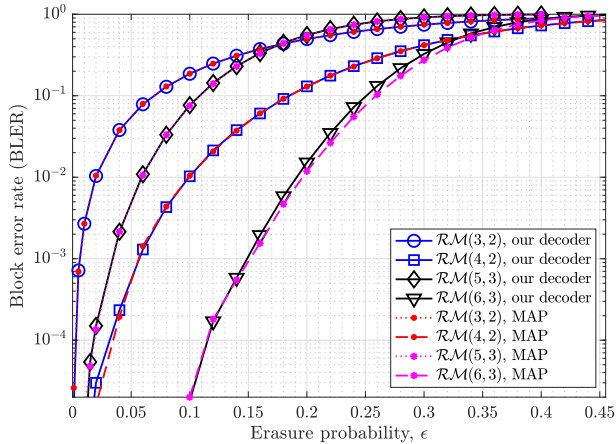}
	\caption{Block error rate (BLER) results for the MAP decoder and our projective decoder for various RM codes of interest to distributed computing. The maximum number of outer iterations is chosen to be $N_{\rm max}=1$, $2$, $2$, and $3$ for $\mathcal{RM}(3,2)$, $\mathcal{RM}(4,2)$, $\mathcal{RM}(5,3)$, and $\mathcal{RM}(6,3)$, respectively.}
	\label{F1_BLER}
% \vspace{-0.15in}
\end{figure}

{Next, we discuss the numerical stability of the proposed RM-coded computing system. As discussed earlier, the decoding algorithm only involves additions and subtractions, as well as inverting small matrices of size at most $\log n +1$. To make sure that these matrices are well-conditioned, we did extensive numerical analysis by looking at the condition numbers of the matrices involved in the inversion. As explained in the paper, if a given projected generator matrix has rank $m-r+2$ after removing the columns specified by the projected corrupted codeword, then the block-MAP decoder will be able to recover the erased indices in the projected vectors. To do so, there are several methods and, in our decoder, we compute the inverse of the matrix $\tilde{\mathbf{G}}\tilde{\mathbf{G}}^{\text{T}}$, where $.^{\text{T}}$ denotes the matrix transpose operation and  $\tilde{\mathbf{G}}$ is a full-rank sub-matrix of the projected generator matrix (excluding the erased columns) obtained by selecting a subset of $m-r+2$ linearly independent rows. Our numerical analysis indicates that the matrices $\tilde{\mathbf{G}}\tilde{\mathbf{G}}^{\text{T}}$ involved in the inversion have relatively small condition numbers and thus are well-conditioned.
For instance, we considered decoding $\mathcal{RM}(6,3)$ using $\binom{6}{3-1}=15$ two-dimensional projections as explained in the paper. We also considered $60$ equidistant erasure probabilities in the interval $[0.01,0.6]$, and examined $1000$ random erasure patterns for each of the erasure probabilities. We then projected each erasure pattern (i.e., corrupted codeword) onto the $15$ projections, and computed the condition number of full-rank $5\times 5$ matrices $\tilde{\mathbf{G}}\tilde{\mathbf{G}}^{\text{T}}$ after each projection. In our numerical results the maximum, i.e., the worst case, condition number was $428.36$ among all these $1000\times 60\times 15=9\times10^5$, which is then equivalent to loosing no more than $3$ precision digits in decimal floating-point representation. Also, the average of the condition numbers among all $1000$ random trails for each given erasure probability and projection subspaces was smaller than $100$.}

%{In this paper, we apply the MC-based simulation to estimate $P_e^{\rm ML}(\epsilon,n)$ for RM codes with the optimal ML decoder, and then evaluate their execution time, numerically, using \eqref{ET2}. The inspiration behind considering RM codes in this paper is that they are believed to have the \textit{almost} optimal scaling which, we conjecture, is sufficient for asymptotic optimality, similar to random linear codes in \Cref{col8}, for coded distributed computing. The simulation results, provided next, support this conjecture.}

%In fact, as shown in \cite{hassani2018almost}, the transition width of the error probability of RM codes at the channel capacity from $\delta_t$ to $1-\delta_t$, for any $\delta_t>0$, can be as small as $\Theta(1/n^{0.5-\kappa})$, for any $\kappa>0$, which is very close to the smallest possible transition width $\Theta(1/n^{0.5})$ that any code can have. Therefore, we expect that RM codes confirm our claim in \Tref{capacityachieving} and achieve the optimal execution time at smaller number $n$ of servers than polar codes. 

\subsection{Polar-Coded Distributed Computation}\label{sec:polar}
Binary polar codes are capacity-achieving linear codes with explicit constructions and low-complexity encoding and decoding \cite{Arikan}. Also, the low-complexity $O(n\log n)$ encoding and decoding of polar codes can be adapted to work over real-valued data when dealing with erasures as in coded computation systems. Given the close connection of RM and polar codes, for the sake of the completeness of our study, we also explore polar-coded computation{, which was first considered in \cite{bartan2019polar}}. However, our simulation results in Section \ref{sec:simulations} demonstrate a significantly superior performance for the RM-coded computation enabled by our proposed low-complexity decoder in Section \ref{sec:RM}.
Next, we briefly explain the encoding and decoding procedure of real-valued data using binary polar codes.
%and delineate how we can obtain the average execution times using \Lref{lemma1}. 

%Given that the characterization of $p_{n,k}(i)$ defined in \eqref{BL} for polar codes is cumbersome,
%In order to encode the $k$ divided tasks of a large computational job, for every $k=1,2,...,n-1$ given $n$ local servers, using polar codes, we first need to construct the $k\times n$ generator matrix of the $(n,k)$ binary linear polar code. To this end, 

\subsubsection{Encoding Procedure}\label{sec:polar_encoding} 
In Section \ref{sec:review_RM}, we briefly explained the construction of the generator matrix for polar codes over BECs.
%Ar{\i}kan's $n\times n$ polarization matrix $\mathbf{G}_n=\begin{bmatrix}
%1 & 0\\ 
%1 & 1
%\end{bmatrix}^{\otimes r}$ is considered, where $r=\log_2 n$ and $\mathbf{A}^{\otimes r}$ denotes the $r$-th Kronecker power of $\mathbf{A}$. Next, a design parameter $\epsilon_d$ is picked, as specified later in Section IV-C. Then the polarization transform $\mathbf{G}_n$ is applied to a binary erasure channel with erasure probability $\epsilon_d$, BEC$(\epsilon_d)$. The erasure probabilities of polarized bit-channels, denoted by $\{Z_i\}_{i=1}^n$, are sorted and the $k$ rows of $\mathbf{G}_n$ corresponding to the indices of the $k$ smallest $Z_i$'s are picked to construct the $k\times n$ generator matrix $\mathbf{G}$.
The encoding procedure using the resulting $k\times n$ generator matrix $\mathbf{G}$, which also applies to any $(n,k)$ binary linear code operating over real-valued data, is as follows. First, the computational job is divided into $k$ smaller tasks. Then the $j$-th encoded task which will be sent to the $j$-th node, for $j=1,2,\dots,n$, is 
the linear combination of all tasks according to the $j$-th column of $\mathbf{G}$.
%the sum of all tasks $i$'s for which the $(i,j)$-th element of $\mathbf{G}$ is $1$.
Throughout the paper (including both RM- and polar-coded computation), we apply the transform $\mathbf{G}\rightarrow 2\mathbf{G}-1$ to convert the entries of the generator matrix from $\{0,1\}$ to $\{-1,+1\}$.
\subsubsection{Decoding Procedure}\label{sec:polar_decoding}
The recursive structure of polar codes can be applied for low-complexity detection/decoding of real-valued data using parallel processing for more speedups \cite{jamali2018low,jamali2020massive}. {To this end, one can apply the decoding algorithms in \cite{bartan2019polar} for polar-coded computing.} It is well-known that in the case of SC decoding over erasure channels, the probability of decoding failure of polar codes is $P_e^{\rm SC}(\epsilon,n)=1-\prod_{i\in\mathcal{B}}(1-Z_i)$, where $\mathcal{B}$ denotes the set of indices of the selected rows.

\section{Simulation Results}\label{sec:simulations}
In this section, simulation results for the execution time of various coded distributed computing schemes are presented. In particular, their gap to the optimal performance are shown and also, their performance gains are compared with the uncoded computation. In addition to the fundamental results presented in Section III, we also provide the simulation results for RM- and polar-coded distributed computation, explained in the previous section. We assume $\mu=1$ for all numerical results in this section. 

\begin{table*}[t]
     		\centering
     		\caption{Average execution time and optimal $k^*$ values for different coding schemes as well as their gap $g_{\rm opt}$ to the optimal performance and their performance improvement gain ${G}_{\rm cod}$ compared to the uncoded computing.}
     		\label{table}\vspace{-0.2cm}
   		\begin{tabular}{||M{0.08in}||M{0.50in}|M{0.7in}|M{1in}|M{1in}|M{0.95in}|M{1in}||}
     			\hline\hline
     \vspace{0.1cm}	$\!\!n$ & 
    {Uncoded}&
     {MDS coding}&
    {Binary random coding}&
     {Polar coding with SC decoding}&
    {RM coding with our proposed decoder}&
    {RM coding with optimal MAP decoding}\\
     			\cline{2-7}
    & $(T_{\rm avg},g_{\rm opt})$ &
     $(T_{\rm avg},k^*,{G}_{\rm cod})$ & 
     $(T_{\rm avg},k^*,g_{\rm opt},{G}_{\rm cod})$ & 
    $(T_{\rm avg},k^*,g_{\rm opt},{G}_{\rm cod})$ &
    $\!(T_{\rm avg},k^*,g_{\rm opt},{G}_{\rm cod})$
    &
    $(T_{\rm avg},k^*,g_{\rm opt},{G}_{\rm cod})$
    \\ \hline	\hline	
     			$\!\!8$  & $\!\!(0.4647,25\%)$& $(0.370,6,20\%)$&$(0.460,7,25\%,1.1\%)$&$(0.412,7,11\%,12\%)$&$(0.389,7,5.1\%,16\%)$&$(0.389,7,5.1\%,16\%)$\\\hline
     			$\!\!16$  & $\!\!(0.2738,44\%)$& $(0.191,11,31\%)$&$(0.226,11,18\%,18\%)$&$(0.217,11,14\%,21\%)$&$\!(0.198,11,3.6\%,28\%)$&$\!(0.198,11,3.6\%,28\%)$\\ \hline
     				$\!\!32$  & $\!\!(0.1581,63\%)$& $(0.0968,22,39\%)$&$(0.105,21,8.6\%,34\%)$&$(0.114,24,18\%,28\%)$&$\!(0.104,26,7.2\%,34\%)$&$\!(0.104,26,7.2\%,34\%)$\\ \hline
     					$\!\!64$  & $\!\!(0.0897,84\%)$& $(0.0488,44,46\%)$&$(0.051,43,3.9\%,44\%)$&$(0.0584,44,20\%,35\%)$&$\!\!\!(0.0506,42,3.7\%,44\%)$&$\!(0.050,42,2.6\%,44\%)$\\ \hline
     						$\!\!\!128$  & $\!\!\!(0.0503,\!105\%)$& $\!(0.0245,88,51\%)$&$\!(0.025,87,1.9\%,50\%)$&$\!(0.0293,88,19\%,42\%)$&---&$\!(0.0252,97,2.8\%,50\%)$\\ \hline
     							$\!\!\!256$  & $\!\!\!(0.0278,\!127\%)$& $\!\!(0.0123,\!175,\!56\%)$&$\!\!(0.0124,174,0.9\%,56\%)$&$\!\!(0.0146,182,19\%,48\%)$&---&$\!\!\!(0.0123,166,0.6\%,56\%)$\\ \hline
     								$\!\!\!512$  & $\!\!\!(0.0153,\!149\%)$& $\!\!\!(0.0061,\!350,\!60\%)$&$\!\!\!(0.0062,349,0.5\%,60\%)$&$\!\!(0.0073,388,19\%,52\%)$&---&$\!\!\!(0.0061,353,0.1\%,60\%)$\\ \hline\hline
\end{tabular}
%\vspace{-0.4cm}
\end{table*}

For MDS and random linear codes, $T_{\rm avg}$ is calculated  using  \eqref{MDS1} and \Cref{RC}, respectively.
 For the polar-coded computation with SC decoding, we apply $P_e^{\rm SC}(\epsilon,n)$, explained in Section \ref{sec:polar_decoding}, together with \eqref{ET2} to evaluate $T_{\rm avg}$ via the numerical integration. Similarly, for the RM-coded computation with our decoder, we first apply our decoding algorithm, presented in Section \ref{sec:RM} to numerically obtain $P_e(\epsilon,n)$ given $\epsilon$ and $n$. We then apply \eqref{ET2} to numerically calculate $T_{\rm avg}$. Note that, for each scheme, we searched over all possible values of $k$ to obtain the optimal $k^*$ that minimizes $T_{\rm avg}$. For the RM-coded computation, we also include the results for the optimal MAP decoder. To do so, given $k$ and $n$, we first construct the generator matrix of the RM (sub-) code by selecting the $k$ rows of the polarization matrix $\mathbf{G}_n$ that have the highest Hamming weights. We then apply Monte-Carlo numerical simulation to obtain $P_e(\epsilon,n)$ (we erase the columns of the resulting generator matrix with probability $\epsilon$ and then declare an error if the resulting matrix is not full rank) for each value of $\epsilon$ considered for the numerical evaluation of $T_{\rm avg}$ via \eqref{ET2}. It is worth mentioning that we used the same $k^*$ values for the RM-coded computation with our decoding algorithm as that of the MAP decoding. In fact, we only evaluate the performance of our decoder if, given $n=2^m$, the value of $k^*$ is of the form $\sum_{i=0}^r\binom{m}{i}$ for some $r\leq m$. This is because our decoding algorithm is specifically designed for RM codes and not their subcodes. Therefore, one needs to build upon the methods in this paper and the intuitions in \cite{jamali2021reed} to extend our decoding algorithm for general RM subcodes (with a carefully designed generator matrix) that admit any code dimension $k$. Given the close performance of our decoder for RM codes to that of the optimal MAP decoder (and also similarities in decoding RM codes and their subcodes), we expect the generalized version of our decoder to the case of RM subcodes to also achieve very close to the optimal performance. 

  \begin{figure}
	\centering
	\includegraphics[trim={0cm 0.2cm 0cm 0cm},width=3.6in]{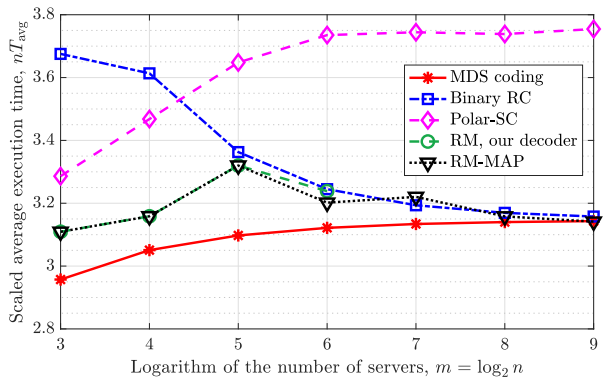}
	\caption{Scaled average execution time of a homogeneous distributed computing system with $\mu=1$ using various coding schemes for finite number of servers $n=8$, $16$, $32$, $64$, $128$, $256$, and $512$.}
	\label{F1}
% \vspace{-0.15in}
\end{figure}

 Numerical results for the performance  of the coded distributed computing systems utilizing MDS codes, binary  random linear codes, polar codes, and RM codes, are presented in Table \ref{table} and are compared with the uncoded scenario over  small-to-moderate blocklengths. 
 We designed the polar code with  $\epsilon_d =0.1$, which is observed to be good enough for the range of blocklengths in Table \ref{table}. One can also attain slightly better performance for  polar codes by optimizing over $\epsilon_d$ specifically for each $n$. Characterizing the best  $\epsilon_d$ as a function of  blocklength $n$ is left for the future work. In Table I, $G_{\rm cod}$ is defined as the percentage of the gain in $T_{\rm avg}$ compared to the uncoded scenario and $g_{\rm opt}$ is defined as the gap of $T_{\rm avg}$ for  the underlying coding scheme  to that of MDS codes, in percentage. Intuitively, $G_{\rm cod}$ for a coding scheme determines \emph{how much gain} this scheme attains and $g_{opt}$ indicates  \emph{how close} this scheme is to the optimal solution. 
 %Observe that polar code with SC decoder  has a significantly large $G_{\rm cod}$ compared to the optimal value of $G_{\rm cod}$, e.g., $52\%$ for $n=512$ versus $60\%$ for the MDS code. Decoding polar codes with ML decoder also improves this result accordingly, e.g., $g_{\rm opt}=5.5\%$ for $n=256$.
 %Observe that polar codes with the low-complexity SC decoder achieve large enough $G_{\rm cod}$'s, close to the optimal values of $G_{\rm cod}$, e.g., $52\%$ for $n=512$ versus $60\%$ for the MDS code. Closer performance to the optimal $T_{\rm avg}$ can be obtained by decoding polar codes with ML decoder, e.g., $g_{\rm opt}=5.5\%$ for $n=256$.
 {Observe that RM-coded computation with the optimal MAP decoder is able to achieve very close to the performance of optimal MDS-coded computation, and also outperform binary RC for relatively small number of servers. Our low-complexity decoding algorithm is also able to achieve the same performance as that of the MAP decoder for the number of servers $n=8$, $16$, $32$, and $64$ where the optimal values of $k^*$ correspond to RM codes. For the larger values of $n$, the optimal $k^*$'s correspond to RM sub-codes and one needs to extend our algorithm to, possibly, achieve close to the performance of the optimal MAP decoder with a low complexity.}
 Figure \ref{F1} shows that random linear codes have weak performance in the beginning  but they quickly approach the optimal $T_{\rm avg}$ so that they have small gaps to the optimal values, e.g., $g_{\rm opt}=0.5\%$ for $n=512$. Also, observe that RM codes significantly outperform polar codes.
  %since, perhaps, they have better distance distribution leading to better $p_{n,k}(i)$'s defined in \Tref{BLC}.
  
  In the case of $\mu=1$, by numerically solving \eqref{RR}, we have for the asymptotically-optimal encoding rate $R^*=0.6822$. 
 %By Remark 5, the optimal encoding rate  for any capacity-achieving code approaches $R^*=0.6822$ as $n \to\infty $ (for $ \mu=1$). 
 Motivated by this fact, in Figure \ref{F2}, the rate of all discussed underlying coding schemes is fixed to $R^*$, and  $nT_{\rm avg}$ is plotted for large blocklengths. Therefore, $T_{\rm avg}$ is not optimized over  rates for the results demonstrated in this figure. Additionally,  the polar code is  designed with $\epsilon_d=1-R^*=0.3178$, which makes the code to be capacity-achieving for an erasure channel with capacity equal to $R^*$. 
 %Note that there is still a gap between  polar codes  with ML decoder and MDS codes.
 %We believe this is due to the fact that binary polar codes with the $2 \times 2$ polarization kernel do not have an optimal scaling exponent \cite{hassani2014finite}. 
 Furthermore, Figure \ref{F2} suggests that RM codes approach very close to the optimal performance, and also do so relatively fast.
 %, supporting our conjecture in Section IV-B.
 \begin{figure}
	\centering
	\includegraphics[trim={0cm 0.2cm 0cm 0cm},width=3.6in]{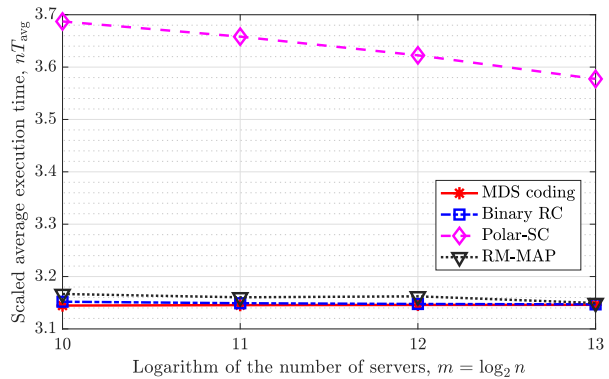}
	\caption{Scaled average execution time of a homogeneous distributed computing system with $\mu=1$ using various coding schemes for asymptotically large number of servers $n=1024$, $2048$, $4096$, and $8192$.}
	\label{F2}
% \vspace{-0.15in}
\end{figure}
 
{In order to further demonstrate the applicability of the proposed approach, we analyzed the average execution time of coded distributed computing systems under shifted Weibull distribution which generalizes the shifted exponential distribution considered in the paper. Specifically, there is a parameter $\alpha$ associated with the Weibull distribution, as specified in Appendix\,\ref{AppA}, and for $\alpha = 1$, the (shifted) Weibull distribution is simplified to (shifted) exponential. This in turn provides a higher flexibility in statistical modeling \cite{coles2001introduction} such as the run-time of the computational servers \cite{reisizadeh2017coded}. Using \Lref{lemma_ap1} in Appendix \ref{AppA}, we computed the average execution time of various coding schemes such as MDS, polar with SC, RM with MAP, and RM with our low-complexity decoder. As shown in Figure \ref{F4} and also summarized in Table II, our low-complexity decoder is able to achieve very close to the performance of MDS-coded computing even for very small number of servers.}

\begin{table}[t]
     		\centering
     		\caption{{Average execution time and optimal $k^*$ values for different coding schemes under shifted Weibull distribution with parameters $\mu=1$ and $\alpha=2$.}}
     		\label{table}\vspace{-0.2cm}
   		\begin{tabular}{||M{0.08in}||M{0.50in}|M{0.5in}|M{0.5in}|M{0.5in}||}
     			\hline\hline
     \vspace{0.1cm}	$n$ & 
     {MDS}&
     {Polar-SC}&
    {RM-MAP}&
    {RM with our decoder}\\
     			\cline{2-5}
    &$(k^*,T_{\rm avg})$ & 
     $(k^*,T_{\rm avg})$ & 
    $(k^*,T_{\rm avg})$ & 
    $(k^*,T_{\rm avg})$
    \\ \hline	\hline	
$8$ & $(7,0.3163)$ & $(7,0.3247)$ & $(7,0.3163)$ & $(7,0.3163)$ \\\hline
$16$ & $(14,0.1633)$ & $(15,0.1676)$ & $(15,0.1637)$ & $(15,0.1637)$ \\\hline
$32$ & $(28,0.0832)$ & $(31,0.0878)$ & $(26,0.0856)$ & $(26,0.0857)$ \\\hline
$64$ & $(55,0.0420)$ & $(58,0.0446)$ & $(57,0.0425)$ & $(57,0.0426)$ \\\hline\hline
\end{tabular}
%\vspace{-0.4cm}
\end{table}

\begin{figure}
	\centering
	\includegraphics[trim={0cm 0.2cm 0cm 0cm},width=3.6in]{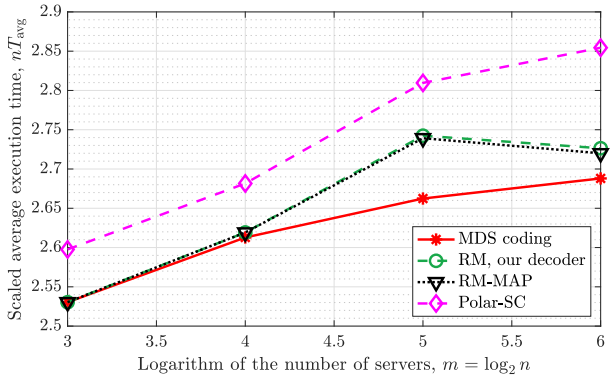}
	\caption{Scaled average execution time of various schemes under shifted Weibull distribution with parameters $\mu=1$ and $\alpha=2$.}
	\label{F4}
% \vspace{-0.15in}
\end{figure}

\section{Conclusions}\label{conclusions}
In this paper, we presented a coding-theoretic approach toward coded distributed computing systems by connecting the problem of characterizing their average execution time to the traditional problem of finding the error probability of a coding scheme over erasure channels. Using this connection, we provided results on the performance of coded distributed computing systems, such as their best performance bounds and asymptotic results using binary random linear codes. 
%We further analyzed the performance of polar and RM codes in the context of distributed computing systems. We conjecture that achieving the capacity of BECs with optimal scaling exponent is a sufficient condition for binary linear codes to be asymptotically optimal, in the sense defined in \Tref{thm7}. We have shown this for binary random linear codes which are well-known to have optimal scaling exponent, even with sparse generator matrices \cite{mahdavifar2017scaling}, and numerically verified this for RM codes by observing that they attain close to optimal performance using a moderate number of servers. It is also interesting to see whether having an optimal scaling exponent is also a necessary condition for codes to be asymptotically optimal, e.g., whether binary polar codes with the $2 \times 2$ polarization kernel are asymptotically optimal or not.
Accordingly, we established the existence of \textit{good} binary linear codes that attain (asymptotically) the best performance any linear code can achieve while maintaining numerical stability against the inevitable rounding errors in practice.
To enable a practical approach for coded computation, we developed a low-complexity algorithm for decoding RM codes over erasure channels, involving only additions and subtractions {(and inverting small matrices of size less than $\log n+1$)}. Our RM-coded computation scheme not only has close-to-optimal performance and explicit construction but also works over real-valued data inputs with a low-complexity decoding.

{
\appendices
\section{Average Execution Time under Shifted Weibull Distribution}\label{AppA}
\begin{lemma}\label{lemma_ap1}
	The average execution time of a coded distributed computing system, using a given $(n,k)$ linear code and assuming shifted Weibull distribution on the run-time of individual nodes, can be characterized as
\begin{align}\label{avgtime_weibull}
T_{\rm avg}=\frac{1}{k}+\frac{1}{\mu k \alpha}\int_{0}^{1}\frac{P_e(\epsilon,n)}{\epsilon\left[\ln(1/\epsilon)\right]^{1-1/\alpha}}d\epsilon,
\end{align}
where $\alpha>0$ represents the shape parameter.
\end{lemma}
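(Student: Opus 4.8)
The plan is to reduce the statement to a single change of variables, just as in the proof of \Lref{lemma1}. By Remark~1, the first equality of \Lref{lemma1}, namely $T_{\rm avg}=\int_{0}^{\infty}P_e(\epsilon(\tau),n)\,d\tau$, holds for \emph{any} model of the node run-times, since it only uses $\E[T]=\int_{0}^{\infty}\Pr(T>\tau)\,d\tau$ together with the identification of $\Pr(T>\tau)$ with the decoding-failure probability $P_e(\epsilon(\tau),n)$ over $n$ i.i.d.\ erasure channels. Hence it suffices to specialize $\epsilon(\tau)$ to the shifted Weibull model and evaluate the integral.

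First I would fix the model: under the shifted Weibull distribution with straggling parameter $\mu$ and shape parameter $\alpha>0$, the per-node CDF is $\Pr(T_i\le t)=1-\exp\!\big(-(\mu(kt-1))^{\alpha}\big)$ for $t\ge 1/k$ (with the run-time at least $1/k$ deterministically), which reduces to \eqref{shifted-exp} when $\alpha=1$. Accordingly the erasure probability of the task on server $i$ at time $\tau$ is $\epsilon(\tau)=\exp\!\big(-(\mu(k\tau-1))^{\alpha}\big)$ for $\tau\ge 1/k$ and $\epsilon(\tau)=1$ for $\tau<1/k$.

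Next I would split $\int_{0}^{\infty}P_e(\epsilon(\tau),n)\,d\tau$ at $\tau=1/k$. On $[0,1/k)$ all $n$ tasks are erased, so $P_e(\epsilon(\tau),n)=P_e(1,n)=1$ and this piece contributes $1/k$. On $[1/k,\infty)$ I would substitute $\epsilon=\epsilon(\tau)$: writing $u=\mu(k\tau-1)\ge 0$ we get $\epsilon=e^{-u^{\alpha}}$, hence $u=[\ln(1/\epsilon)]^{1/\alpha}$ and $\tau=\tfrac1k+\tfrac{1}{\mu k}[\ln(1/\epsilon)]^{1/\alpha}$, so that
\[
d\tau=-\frac{1}{\mu k\alpha}\,\frac{[\ln(1/\epsilon)]^{1/\alpha-1}}{\epsilon}\,d\epsilon=-\frac{1}{\mu k\alpha}\,\frac{d\epsilon}{\epsilon\,[\ln(1/\epsilon)]^{1-1/\alpha}} .
\]
Since $\epsilon(\tau)$ is continuous and strictly decreasing on $[1/k,\infty)$, running from $\epsilon=1$ at $\tau=1/k$ to $\epsilon=0$ as $\tau\to\infty$, reversing the limits of integration absorbs the minus sign and gives $\int_{1/k}^{\infty}P_e(\epsilon(\tau),n)\,d\tau=\tfrac{1}{\mu k\alpha}\int_{0}^{1}\frac{P_e(\epsilon,n)}{\epsilon\,[\ln(1/\epsilon)]^{1-1/\alpha}}\,d\epsilon$. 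Adding the $1/k$ from $[0,1/k)$ yields exactly \eqref{avgtime_weibull}.

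There is no substantive obstacle here; the only delicate point is choosing the right normalization of the Weibull tail — writing it as $\exp(-(\mu(kt-1))^{\alpha})$ rather than $\exp(-\mu(kt-1)^{\alpha})$ is precisely what produces the clean prefactor $1/(\mu k\alpha)$ and the $\mu$-free factor $[\ln(1/\epsilon)]^{1-1/\alpha}$. I would also note that the resulting improper integral converges: near $\epsilon=0$ because $P_e(\epsilon,n)=O(\epsilon)$ for a full-rank generator matrix (by \eqref{BL-MAP} the sum carries no constant term, cf.\ Remark~2), and near $\epsilon=1$ because $[\ln(1/\epsilon)]^{-(1-1/\alpha)}$ behaves like $(1-\epsilon)^{-(1-1/\alpha)}$, which is $d\epsilon$-integrable for every $\alpha>0$; this legitimizes the change of variables.
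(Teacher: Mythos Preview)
Your proposal is correct and follows essentially the same route as the paper: both start from the distribution-free identity $T_{\rm avg}=\int_0^\infty P_e(\epsilon(\tau),n)\,d\tau$, specify the shifted-Weibull tail $\epsilon(\tau)=\exp\!\big(-(\mu(k\tau-1))^{\alpha}\big)$, and perform the change of variables $\tau\mapsto\epsilon$ (the paper records this via $d\epsilon/d\tau=-\mu k\alpha\,\epsilon\,[\ln(1/\epsilon)]^{1-1/\alpha}$, you via the intermediate substitution $u=\mu(k\tau-1)$). Your additional convergence check near $\epsilon=0$ and $\epsilon=1$ is a welcome detail that the paper does not include.
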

\begin{proof}
The CDF of the run time $T_i$ of each local node $i$ under shifted Weibull distribution and the homogeneous system model of this paper can be expressed as \cite{coles2001introduction}:
\begin{align}\label{CDF_weibull}
\Pr(T_i\leq t)=1-\exp\left(-[\mu(kt-1)]^{\alpha}\right),~~~\forall ~\!t\geq 1/k.
\end{align}
Equivalently, the erasure probability $\epsilon(t)$ is equal to one for $t<1/k$ and, otherwise, is equal to $\exp\left(-[\mu(kt-1)]^{\alpha}\right)$ for $t\geq 1/k$.
Now using \eqref{ET1} and given that for the shifted Weibull distribution $d\epsilon(\tau)/d\tau=-\mu k\alpha \epsilon(\tau)\left[\ln(1/\epsilon(\tau))\right]^{1-1/\alpha}$, and that $P_e(\epsilon(\tau),n)=1$ for all $\tau\leq1/k$, we have \eqref{avgtime_weibull} by the change of variables. Note that the result simplifies to the case of shifted exponential distribution for $\alpha=1$, presented in \Lref{lemma1}.
\end{proof}

%\section{An Example of Decoding $\mathcal{RM}(3,2)$}
%Consider a corrupted codeword of $\mathcal{RM}(3,2)$, denoted by $\mathbf{y}=(y_1,y_2,\cdots,y_8)$, with an erasure at the fifth position. Also, as detailed in the paper, we consider three projection subspaces as $\mathbb{B}_1=\{(0,0,0),(1,0,0)\}$, $\mathbb{B}_2=\{(0,0,0),(0,1,0)\}$, and $\mathbb{B}_3=\{(0,0,0),(0,0,1)\}$. Consider projection in the direction of $\mathbb{B}_1$. For this projection, the quotient space is $\mathbb{E}/\mathbb{B}_1=\{(1,5),(2,6),(3,7),(4,8)\}$. Additionally, one can observe that the choice of combining coefficients $\gamma_1=-1$ and $\gamma_2=1$ results in the minimum rank of $m-r+2=3$ for all three projected generator matrices. Accordingly, the projected received vector $\mathbf{y}/\mathbb{B}_1$ is of the form $(y_5-y_1,y_6-y_2,y_7-y_3,y_8-y_4)$ which will have an erasure only at the first position. Since the corresponding projected generator matrix remains rank-$3$ after removing the first column, the block-MAP decoder will be able to correct the single erasure in $\mathbf{y}/\mathbb{B}_1$. Similarly, one can show that the projected received vectors $\mathbf{y}/\mathbb{B}_2$ and $\mathbf{y}/\mathbb{B}_3$ will both have a single erasure at the third position. Given that their corresponding projected generator matrices will remain rank-$3$ after removing the third column, the decoder will be able to recover the erased bits at the third indices. Finally, given the successful decoding of the projected vectors, one can successfully decode the corrupted codewords after the aggregation.
}

\bibliographystyle{IEEEtran}
\bibliographystyle{IEEEtran}

\bibliography{IEEEabrv}

\end{document}